\documentclass[9pt,shortpaper,twoside,web]{ieeecolor}
\usepackage{generic}
\usepackage{cite}
\usepackage{amsmath,amssymb,amsfonts}
\usepackage{algorithmic}
\usepackage{graphicx}
\usepackage{textcomp}
\usepackage{bm}
\usepackage[ruled,vlined]{algorithm2e}

\newtheorem{theorem}{Theorem}[section]

\newtheorem{lemma}[theorem]{Lemma}
\newtheorem{assumption}{Assumption}

\def\BibTeX{{\rm B\kern-.05em{\sc i\kern-.025em b}\kern-.08em
    T\kern-.1667em\lower.7ex\hbox{E}\kern-.125emX}}
\markboth{\journalname, VOL. XX, NO. XX, XXXX 2017}
{Mallik \MakeLowercase{\textit{et al.}}: Episodically adapted network-based controllers (July 2021)}

\begin{document}
\title{Episodically adapted network-based controllers}
\author{Sruti Mallik, \IEEEmembership{IEEE Student Member}, and ShiNung Ching  \IEEEmembership{Member, IEEE}
\thanks{Insert date of submission. S. Ching holds a Career Award at the Scientific Interface from the Burroughs-Wellcome Fund.This work was partially supported by grants 1653589 and 1724218 from the US National Science Foundation }
\thanks{All authors are with the Department of Electrical and Systems Engineering, Washington University, St. Louis, MO 63130, USA (e-mail: sruti.mallik@wustl.edu)}
\thanks{S. Ching is with the Department of Biomedical Engineering, Washington University, St. Louis, MO 63130, USA (e-mail: shinung@wustl.edu).}}

\maketitle

\begin{abstract}
We consider the problem of distributing a control policy across a network of interconnected units. Distributing controllers in this way has a number of potential advantages, especially in terms of robustness, as the failure of a single unit can be compensated by the activity of others. However, it is not obvious \textit{a priori} how such network-based controllers should be constructed for any given system and control objective. Here, we propose a synthesis procedure for obtaining dynamical networks that enact well-defined control policies in a model-free manner. We specifically consider an augmented state space consisting of both the plant state and the network states. Solution of an optimization problem in this augmented state space produces a desired objective and specification of the network dynamics.  Because of the analytical tractability of this method, we are able to provide convergence and robustness assessments.
\end{abstract}

\begin{IEEEkeywords}
Networked control system, Optimal control, Distributed algorithms/control, Learning 
\end{IEEEkeywords}

\section{Introduction}
\label{sec:introduction}

Realizing network-based controllers is a long-running thread in systems control theory. The overarching question at hand is how one can distribute a given control policy across a population or network of constituent units whose collective action imparts the desired objective \cite{langbort2010distributed, gama2021graph}. One of the advantages of such an approach is in terms of robustness, since in principle, such a scheme allows for some units to fail while other units compensate, thus leaving the overall control policy intact.

Like conventional controllers, distributed controllers can be designed in model-based and model-free paradigms. The latter, relying on mechanisms of adaptation and learning \cite{ sutton1992reinforcement, aswani2013provably,  jiang2020learning} are particularly germane as control applications become increasingly centered on situations in which plant dynamics are variable or unknown \textit{a priori}. In this context, of particular significance are adaptive control\cite{aastrom2013adaptive} or iterative learning control algorithms \cite{bristow2006survey} which has received much attention for their ability to synthesize control under uncertainty. However, these algorithms do not necessarily focus on creating control or actuation signals through distributed computations.

One of the intersection points for the above issues is, of course, the domain of artificial neural networks (ANNs) and learning/optimization.  Such constructs are nominally capable of implementing a diversity of control laws and, potentially, learning these policies in an online, adaptive manner\cite{barto1983neuronlike, sontag1993neural, narendra1997adaptive, nguyen1990neural, sanchez2011control, lewis2020neural}. At a high level, ANNs are motivated by their biological counterparts, where we know that the aforementioned premise of robustness is fully enacted.  

There is intense current attention on ANNs in a variety of learning and adaption problems. However, there remain many open challenges and caveats in how to design these networks -- and, especially, recurrent ANNs (RNNs) -- for continuous state and input space control problems. For example, gradient descent and other first order learning methods for RNNs such as the ubiquitous backpropogation though time, struggle to retain long-term temporal dependencies \cite{bengio1994learning, silva1997bridging}, a crucial issue in the face of control problems where the plant dynamics may span several time-scales. Further, such methods typically require secondary assumptions regarding low rank network structure \cite{schuessler2020dynamics} or careful regularization of the objective \cite{martens2011learning} in order to ensure convergence to a viable control policy. Learning methods based on reward reinforcement \cite{song2017reward} are also possible, but likewise face issues of fragility and poorly understood convergence properties \cite{bertsekas1996neuro, bucsoniu2012least, van2012reinforcement, friedrich2019least}.

The overall goal of this paper is to advance the above considerations by proposing and studying a distributed, network-based control strategy together with an analytically tractable adaptation method, in order to control unknown dynamical systems.  In our previous work \cite{mallik2020neural, mallik2021topdown}, we developed model-based frameworks for synthesizing network dynamics that implement certain classical control policies. Our goals there were different than in the current work, as we were focused primarily on the emergent dynamics associated with objective functions intended to serve neuroscience endpoints. As such, we assumed perfect knowledge of plant dynamics.  Here, we leverage the potential for using these strategies to develop distributed controllers for physical systems, wherein plant dynamics may be unknown. 

 Thus, we engage the problem of model-free distributed control design. A high-level schematic of the problem at hand is depicted in Figure \ref{schematic_decoding}. Here, the activity of a network of units is responsible for controlling a dynamical systems, whose dynamics is unknown. The two operative questions we seek to answer are: (i) what should be the dynamics of units in the network and their interaction, and (ii) how can they be learned online. Importantly, we do not solve a sequential problem of system identification followed by control design \cite{moerland2020model, doya2002multiple, kaiser2019model}.  Instead, we attempt to build/learn our network-based controllers directly online without prior model inference. 
 To facilitate this tractably, our key development is the formulation of an augmented state space consisting of both the plant and network states. The latter is endowed with a generic integrator dynamics, which allows us to pose a single optimization problem for a control objective, whose solution determines the interconnectivity between network units. When this control objective is quadratic, we are able to deploy episodic adaptation to solve this problem in a model-free manner and, further, ascertain certain analytical convergence properties. 
 Further, we show that one of the benefits of distributing the solution in a network is the robustness in performance it provides in the event of neuronal failure \cite{kalampokis2003robustness, huang2019spiking}. For instance, when we remove contribution of a subset of network units before, during or after learning, the network adjusts and can still perform the task at hand.     

Therefore, the key contributions of this paper are: (a) synthesis of a dynamical network that can optimally control a lower dimensional physical system in a model-free manner through distributed network activity;(b) theoretical characterization of the conditions under which this iterative algorithm approaches the optimal policy; (c) analysis of the network performance to understand the properties of robustness under neuronal failure and the influence of different hyperparameters of the algorithm on the solution. 

The remainder of this paper are organized as follows. In Section 2, we formally introduce the mathematical details of the problem and approaches used to arrive at a solution. In Section 3, we present the key results of this research by demonstrating our framework through two numerical examples. Finally, in Section 4 we discuss the strengths of the proposed framework and outline topics that can be addressed through future research.

\begin{figure}[t]
	\centering
	\includegraphics[width = 0.90\linewidth]{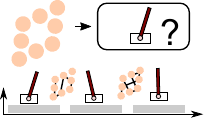}
	\caption{We consider the problem of constructing and parameterizing distributed, network-based controller for unknown systems. A base network architecture is analytically developed and then adapted over successive learning episodes.}
	\label{schematic_decoding}
\end{figure}

\section{Problem Formulation}

\subsection{Model Class}

We focus on a class of second order control problems.  Consider the linear plant dynamics:
\begin{equation}
	\label{pos}
	\bm{\Psi}_{t+1} = \bm{\Psi}_t + C\bm{\nu}_t
\end{equation}
\begin{equation}
	\label{vel}
	\bm{\nu}_{t+1} = A_{\Psi}\bm{\Psi}_t +  A_{\nu}\bm{\nu}_t + B_x\mathbf{x}_t
\end{equation}
where $\bm{\Psi}_t, \bm{\nu}_t \in \mathbf{R}^m$ are generalized position and velocities, respectively and $A_{\Psi}$, $A_{\nu}$ $C$ are matrices of appropriate dimension.

At the heart of our formulation is the vector $\mathbf{x}_t \in \mathbb{R}^n$, which contains the activity of $n$ units over which we seek to distribute the control of the system at hand. The activity of these $n$ units is linearly projected onto the velocity dynamics via $B_x \in \mathbb{R}^{m \times n}$. A key premise is that the dimensionality of the network-based controller is much larger than that of the plant, i.e., $n > m$.  Then, \textit{the overall problem is how to specify the time evolution of $\mathbf{x}_t$ i.e., network dynamics so as to meet a desired control objective, when  $A_{\Psi}$, $A_{\nu}$, $B_x$, $C$ are unknown.}


To probe this question, we consider the following high-level objective function function: 
\begin{equation}
	\label{first_opt}
	\mathbb{J} = \frac{1}{2}\sum_{t = 0}^{\infty}[q_J(\bm{\Psi}_t, \bm{\nu}_t) + \mathbf{x}_t^T\mathbf{S}\mathbf{x}_t + \mathbf{u}_t^T\mathbf{R}\mathbf{u}_t]
\end{equation}
wherein we introduce the secondary dynamics
\begin{equation}
	\mathbf{x}_{t+1} = \mathbf{x}_t + \Delta t \mathbf{u}_t,
\end{equation}
again emphasizing that $\mathbf{x}_t$ here is not the state of the plant, but rather of the network that will be enacting control of the plant. This objective function can be interpreted as follows. The first term here quantifies the quality of performance of the system in the low-dimensional space, the second term quantifies the energy expenditure of the network, while the third term acts as a regularizer preventing arbitrary large changes in network activity. 

If the dynamics of the model (i.e., $A_{\Psi}$, $A_{\nu}$, $B_x$, $C$) were known and if $q_J$ is specified as a quadratic function, then this problem reduces to a classical optimal control problem, i.e., the infinite horizon Linear Quadratic Regulator \cite{anderson2007optimal}. When the complete dynamics of the model is unknown, then a solution to this optimization problem can be found by either prior model system identification \cite{moerland2020model, kaiser2019model}, or through `model-free' adaptive control design frameworks \cite{bradtke1994adaptive, degris2012model}.  Our goal is to specify the $\mathbf{x}_t$ dynamics without prior system identification (i.e., to learn/construct these dynamics online).


\subsection{Network synthesis problem}

To realize the $\mathbf{x}_t$ dynamics, we transform the problem by defining, $\bm{\Omega}_t \equiv [\bm{\Psi}_t^T, \bm{\nu}_t^T, \mathbf{x}_t^T]^T \in \mathbb{R}^{2m + n}$ . Combining the dynamics from \eqref{pos} and \eqref{vel}, we have:
\begin{equation}
	\label{discrete_sys}
	\bm{\Omega}_{t+1} = \mathbf{A}\bm{\Omega}_{t} + \mathbf{B}\mathbf{u}_t 
\end{equation}
Here, 
\begin{equation}
\mathbf{A} = \left[\begin{array}{ccc}
	\mathbf{I}_m &C &\mathbf{0}\\
	A_{\Psi} & A_{\nu} &B_x \\
	\mathbf{0} &\mathbf{0} &\mathbf{I}_n
\end{array}\right], \mathbf{B} = \left[\begin{array}{c}
\mathbf{0} \\
\mathbf{0} \\
\Delta t \mathbf{I}_n
\end{array}\right]
\end{equation}
and $\Delta t$ is the sampling interval and we assert that the initial state is selected from a fixed distribution i.e., $\bm{\Omega}_0 \sim \mathcal{D}$. With the definition of $\mathbf{\Omega}_t$, we can now write the objective function as:
\begin{equation}
	\label{first_opt_discrete}
	\begin{aligned}
		\mathbb{J}_d &=\frac{1}{2}\sum_{0}^{\infty}[\bm{\Omega_{t}}^T\mathbf{Q}\bm{\Omega_{t}}+\mathbf{u}_t^T\mathbf{R}\mathbf{u}_t]		
	\end{aligned}
\end{equation}
 The goal here is to synthesize the dynamics of the network i.e., $\mathbf{u}_t$ that optimizes this objective function without explicit knowledge of model dynamics. In other words, we need to solve the following optimization problem without knowing $A$ and $B$. 

\begin{equation}
	\begin{split}
	\label{op_problem}
	\underset{\mathbf{u}_t}{argmin}\hspace{0.5mm}\frac{1}{2}\sum_{0}^{\infty}[\bm{\Omega_{t}}^T\mathbf{Q}\bm{\Omega_{t}}+\mathbf{u}_t^T\mathbf{R}\mathbf{u}_t]\\
	\text{subject to} \hspace{0.5mm} 	\bm{\Omega}_{t+1} = \mathbf{A}\bm{\Omega}_{t} + \mathbf{B}\mathbf{u}_t 
	\end{split}
\end{equation}
For ease of notation, we will use $\mathbf{c}_t \equiv [\bm{\Omega_{t}}^T\mathbf{Q}\bm{\Omega_{t}}+\mathbf{u}_t^T\mathbf{R}\mathbf{u}_t]$ in subsequent sections. Note that while we consider a discrete time problem, the state space $\bm{\Omega} = \mathbb{R}^{2m+n}$ and the action space $\mathbf{U} = \mathbb{R}^n$ are infinite and continuous. 

\section{Results}

\subsection{Episodically adapting distributed network as controller}

\subsubsection{Online Least Squares Approximate Policy Iteration}

We define the activity of the network as policy, i.e., $\mathbf{u}_t = \pi(\bm{\Omega}_t)$. Starting from the state $\bm{\Omega}_t$ and following a policy $\pi$, the cost-to-go or the value function \cite{sutton2018reinforcement, lewis2012reinforcement} is given by:
\begin{equation}
	\label{cost_to_go}
	\mathrm{V}_{\pi}(\bm{\Omega}_t) = \frac{1}{2} \sum_{t}^{\infty} \mathbf{c}_t
\end{equation}
We introduce a discount factor $0 \leq \gamma \leq 1$ to \eqref{cost_to_go}, in order to bias the cost to immediate errors:
\begin{equation}
	\label{cost_to_go_discounted}
	\mathrm{V}_{\pi}(\bm{\Omega}_t) = \frac{1}{2} \sum_{i = 0}^{\infty} \gamma^i \mathbf{c}_{t + i}
\end{equation}
Using the definition in \eqref{cost_to_go_discounted}, we can specify
a state-action value function $\mathrm{Q}_{\pi}$ \cite{bradtke1994adaptive,watkins1989learning}
\begin{equation}
	\label{Q_val}
	\mathrm{Q}_{\pi}(\Omega, u) = \mathbf{c}(\Omega, u) + \gamma \mathrm{V}_{\pi}(\bm{\Omega}_{t+1}),
\end{equation}
 the sum of the one step cost incurred as a result of taking action $u$ from state $\Omega$ and following the policy $\pi$ from there onward. Now, we can write $\mathrm{V}_{\pi}(\Omega) = \mathrm{Q}_{\pi}(\Omega, \pi(\Omega))$ and, thus
\begin{equation}
	\label{Q_val_modified}
	\mathrm{Q}_{\pi}(\bm{\Omega}_t, \mathbf{u}_t) = \mathbf{c}(\bm{\Omega}_t, \mathbf{u}_t) + \gamma \mathrm{Q}_{\pi}(\bm{\Omega}_{t+1}, \pi(\bm{\Omega}_{t+1})).
\end{equation}
Therefore, the state-action value function $\mathrm{Q}_{\pi}$ can be computed using:
\begin{equation}
	\label{quadratic}
	\mathrm{Q}_{\pi}(\bm{\Omega}_t, \mathbf{u}_t) = \left[\begin{array}{cc}
		\bm{\Omega}_t &\mathbf{u}_t \end{array}\right] \mathbf{H}_{\pi} \left[\begin{array}{c}
	\bm{\Omega}_t \\
	\mathbf{u}_t 
\end{array}\right]
\end{equation}
If the model dynamics were known, then the optimal strategy could be derived directly by taking the derivative of the right hand side of \eqref{Q_val_modified} and setting it to zero, i.e.,
\begin{equation}
	\label{optimal_known_dynamics}
	\begin{aligned}
		\pi(\bm{\Omega}_t) &= \underset{\mathbf{u}_t}{\text{argmin}}\hspace{1mm}\mathbf{c}(\bm{\Omega}_t, \mathbf{u}_t) + \gamma \mathrm{Q}_{\pi}(\bm{\Omega}_{t+1}, \pi(\bm{\Omega}_{t+1})) \\
		&= -\gamma(\mathbf{R}+\gamma \mathbf{B}'\mathbf{P}_{\pi}\mathbf{B})^{-1}\mathbf{B}'\mathbf{P}_{\pi}\mathbf{A}\bm{\Omega}_t \\
		&= -\mathbf{H}_{\pi(22)}^{-1} \mathbf{H}_{\pi(21)}\bm{\Omega}_t,
	\end{aligned}
\end{equation}
$\mathbf{H}_{\pi} \in \mathbb{R}^{(n'+n) \times (n'+n)}$ is:
\begin{equation}
	\label{Q_val_rhs}
	\mathbf{H}_{\pi} = 
	\left[\begin{array}{cc}
		\mathbf{Q} + \gamma \mathbf{A}'\mathbf{P}_{\pi}\mathbf{A} &\gamma \mathbf{A}'\mathbf{P}_{\pi}\mathbf{B} \\
		\gamma \mathbf{B}'\mathbf{P}_{\pi}\mathbf{A} &\mathbf{R}+\gamma \mathbf{B}'\mathbf{P}_{\pi}\mathbf{B}
	\end{array}\right]
\end{equation}
and $n' = 2m+n$. The details of the derivation for \eqref{optimal_known_dynamics} can be found in \cite{bradtke1994adaptive}. However, in our formulation the plant dynamics are unknown, necessitating formulating the state action value function in a data-driven fashion. 

The basic idea of our adaptation approach is based on the quadratic nature of the state action value function, which can therefore be written as: 
\begin{equation}
	\label{approximate}
	\mathrm{Q}_{\pi}(\bm{\Omega}_t, \mathbf{u}_t) = \bm{\Theta}^T \phi_t
\end{equation}
Here, $\bm{\Theta}  = vec(\mathbf{H}_{\pi})$ and 
\[
\phi_t = \left[\begin{array}{c}
	\bm{\Omega}_t \\
	\mathbf{u}_t
\end{array}\right] \otimes \left[\begin{array}{c}
\bm{\Omega}_t \\
\mathbf{u}_t
\end{array}\right].
\] 
If we can obtain an estimate $\hat{\bm{\Theta}}$ of the true parameters $\bm{\Theta}$, then we can likewise estimate the state action value function. 
We start by assuming a linear policy:
\begin{equation}
	\label{linear_policy}
	\pi(\mathbf{\Omega}_t) \equiv \mathbf{W}\mathbf{\Omega}_t
\end{equation}
Substituting \eqref{linear_policy} in the state action value function for a chosen policy $\pi$ in \eqref{Q_val_modified}, yields the following error:
\begin{equation}
	\label{error}
	\mathbf{e}_t = \mathbf{c}_t - \hat{\bm{\Theta}}^T\Phi_t
\end{equation}
where,
\[\Phi_t = \left[\begin{array}{c}
	\bm{\Omega}_t \\
	\mathbf{u}_t
\end{array}\right] \otimes \left[\begin{array}{c}
	\bm{\Omega}_t \\
	\mathbf{u}_t
\end{array}\right] - \gamma \left[\begin{array}{c}
\bm{\Omega}_{t+1} \\
\mathbf{W}\bm{\Omega}_{t+1}
\end{array}\right] \otimes \left[\begin{array}{c}
\bm{\Omega}_{t+1} \\
\mathbf{W}\bm{\Omega}_{t+1}
\end{array}\right].\]
Therefore, we estimate the elements of the matrix $\mathbf{H}_{\pi}$ as
\begin{equation}
	\label{error_min}
	\hat{\bm{\Theta}} = \underset{}{\text{argmin} \hspace{2mm}} \sum_{t = 1}^{T} \mathbf{e}_t^2
\end{equation}
Here, $T$ is the horizon for which the data is collected by probing the system. Combining equations \eqref{error_min} and \eqref{optimal_known_dynamics} leads us to Algorithm \ref{algo}, an \textit{approximate online least squares policy iteration}. In essence, we begin with an initial choice of a policy $\pi_k$ and use it to collect samples for an episode comprising of $T$ timesteps. Next, we use the data collected to estimate $\mathrm{Q}_{\pi}$. Based on this estimate, we compute an updated policy $\pi_{k+1}$ with which we probe the system next. We continue to repeat these steps till the policy converges (see Algorithm \ref{algo}). Here, 
$\mathbf{C}_k = \left[\mathbf{c}_{1|k}, ..., \mathbf{c}_{T|k} \right] \in \mathbb{R}^{1 \times T}$ and $\bm{\Phi}_k = \left[\Phi_{1|k},... \Phi_{T|k}\right] \in \mathbb{R}^{L \times T}$ where $L = (n' + n)^2$ is the number of parameters to be estimated. We incorporate exploratory action in the input signal at every iteration $k$, wherein the input that is used to collect data is given as:
\begin{equation}
	\label{exploration}
	\mathbf{u}_t \equiv \pi_k(\bm{\Omega_{t}}) + \varepsilon_t = \mathbf{W}_k\mathbf{\Omega}_t +  \varepsilon_t
\end{equation}
where, $\varepsilon_t \sim \mathcal{N}(0, \sigma_y^2)$. This exploration technique injects randomization while ensuring that a stabilizing policy in the neighborhood of the updated policy is considered for generating data during each episode of the algorithm, safeguarding against badly scaled solutions. 

 There are several key distinctions between the above framework and least square policy iteration (LSPI) \cite{lagoudakis2003least, bucsoniu2010online}. LSPI operates off-line by performing policy improvements only when the state action value function has been estimated accurately. In contrast, in Algorithm \ref{algo}, the policy is updated on every episode.
\begin{algorithm}[t]
	\SetAlgoLined
	k = 0, Initial policy $\pi_0$, Initial state $\bm{\Omega}_0 \sim \mathcal{D}$\;
	\Repeat{$||\hat{\mathbf{H}}_{\pi|k+1} - \hat{\mathbf{H}}_{\pi|k}||<Tol$}{
		t = 0\; $D = \left\{ \right\}$\;
		\For{t = 1,..., T}{
			$\mathbf{u}_t = \pi_k(\bm{\Omega}_t) + \varepsilon_t$\;
			Apply $\mathbf{u}_t$ and measure the next state $\bm{\Omega}_{t+1}$ and the cost $\mathbf{c}_t$\;
			$D = D \bigcup \left\{ \bm{\Omega}_{t}, \mathbf{u}_t, \mathbf{c}_t, \bm{\Omega}_{t+1}\right\}$\;
		}
	\textbf{Policy Evaluation}\\
	$\hat{\bm{\Theta}}_k = (\bm{\Phi}_k^T)^{\dagger}\mathbf{C}_k^T$\;
	Unpack into matrix $\hat{\mathbf{H}}_{\pi|k}$\;
	\textbf{Policy Update} \\
	$\pi_{k+1}(\bm{\Omega}_{t}) = -\hat{\mathbf{H}}_{\pi(22)|k}^{-1} \hat{\mathbf{H}}_{\pi(21)|k}\bm{\Omega}_t$\;
	$k = k + 1$\;		
	}
	\caption{Online Least Squares Approximate Policy Iteration}
	\label{algo}
\end{algorithm}
Another key difference between this work and \cite{bucsoniu2010online} is how we construct the policy that controls the system. In \cite{bucsoniu2010online}, the online LSPI algorithm is used to construct a control signal directly. In contrast, here we derive the dynamics of a network that \textit{generates} the control signal. In other words, the optimal policy is distributed across the high-dimensional network and embedded in the dynamics of its units.  A second difference between this work and \cite{bucsoniu2010online} is in how we include exploration in our algorithm. Online LSPI must explore to ensure that it provides a good estimate of the state action value function. In \cite{bucsoniu2010online}, an $\epsilon$-greedy exploration is used: at every timestep, a random exploratory action is applied with a probability of $\epsilon_k \in [0,1]$.  In contrast, we inject randomization here in a manner such that policies only in the neighborhood of the current derived policy are considered in each episode. This ensures that an arbitrary random policy for which the system exhibits unbounded behavior is not considered during policy iteration. This is important as unbounded behavior of the system in any episode could negatively impact the convergence of the algorithm (we discuss convergence of the algorithm further in Section C, below). 

\subsubsection{Interpretation of the online least squares policy iteration as a two timescale dynamical network }

The algorithm for obtaining the optimal solution begins with an initial `guess' for a strategy that will enable the performance of the motor control task. Thereafter, through interactions with the environment the network evaluates the current policy and updates it. This process continues until the network learns a strategy that can accomplish the task at hand optimally. It must be noted that in this work, we are not estimating the dynamics of the model \textit{per se}. We are instead performing data-driven optimization that directly leads us to a network enacting the optimal policy. 
\begin{figure}[h]
	\centering
	\includegraphics[width = \linewidth]{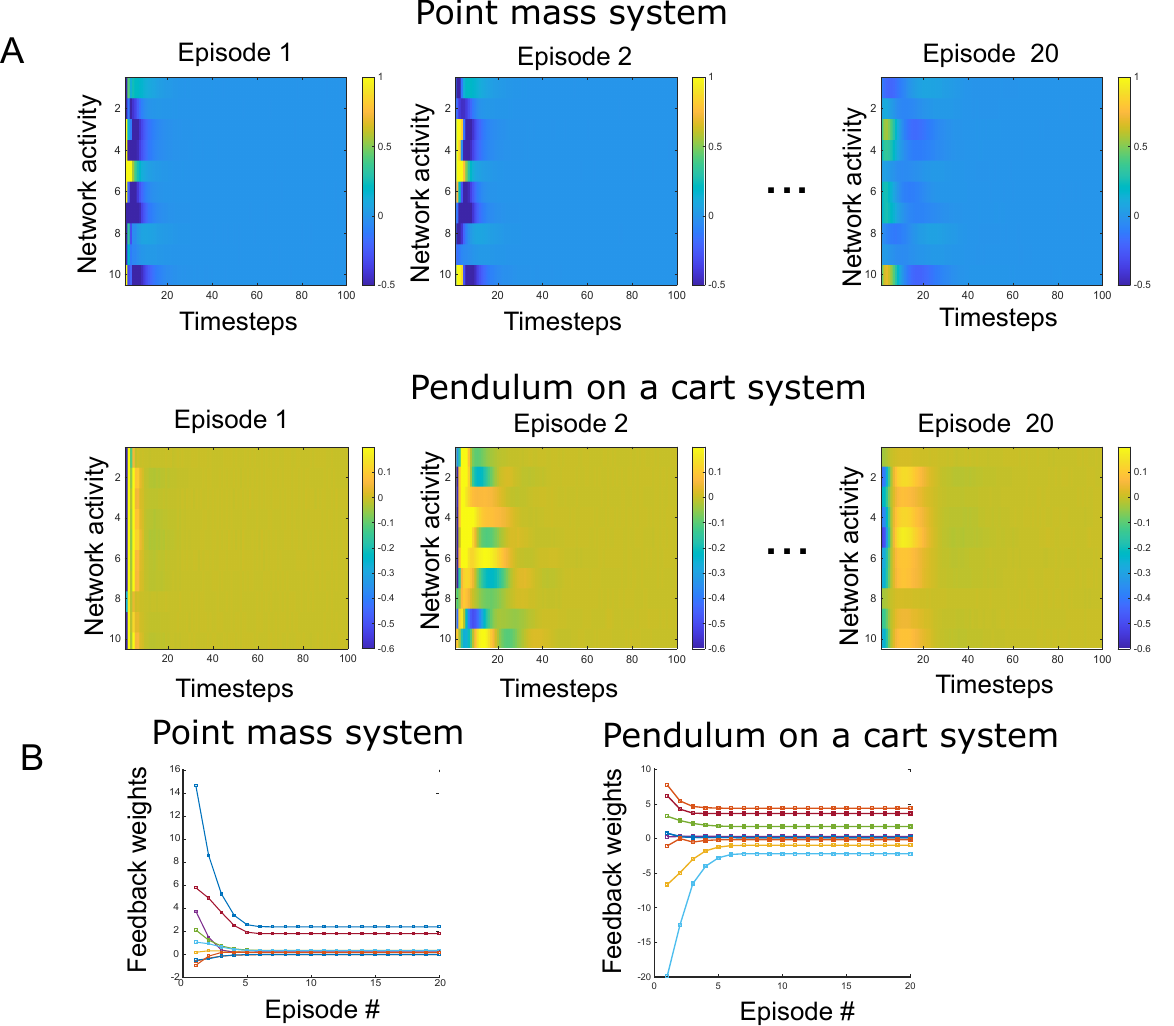}
	\caption{A. Network activity during the first 100 timesteps of episodes (top) point-mass system (bottom) inverted pendulum on a cart. (Note that activity here is represented as changes wrt a positive baseline). B. Network connections evolving over episodes (left) point-mass system (right) inverted pendulum on a cart. }
	\label{synaptic_weights}
\end{figure}
There are two entities in our network:(a) states associated with the network activity, and (b) the feedback matrix that undergoes adaptation to learn and perform the task at hand. 

We can identify two time-scales \cite{lewis2009reinforcement} in Algorithm \ref{algo}: first, a slow (mediated by $k$) timescale associated with adaption of the feedback weights:
\begin{equation}
	\label{outer_time_scale}
	\begin{aligned}
		\pi_{k+1}(\bm{\Omega}_{t}) &= \mathbf{W}_{k+1}\bm{\Omega}_t \\
		&\equiv -\hat{\mathbf{H}}(\bm{\Theta}_k)_{22}^{-1} \hat{\mathbf{H}}(\bm{\Theta}_k)_{21}\bm{\Omega}_t 		
	\end{aligned}
\end{equation}
and second, a faster timescale (mediated by $t$) associated with the dynamics of the network itself:
\begin{equation}
	\label{inner_time_scale}
	\begin{aligned}
		\mathbf{x}_{t+1} &= \mathbf{x}_t + \Delta t \mathbf{u}_t \\
		&\equiv \mathbf{x}_t + \Delta t \mathbf{W}_k \bm{\Omega}_t
	\end{aligned}
\end{equation}
The slow dynamics here (Equation \eqref{outer_time_scale}) directly arises from the episodic policy updates (see Algorithm 1). On the other hand, the network operates through the fast dynamics (Equation \eqref{inner_time_scale}) and actuates the physical system. 

If we write  $\mathbf{W}_k = [\mathbf{W}_k^{\Psi}: \mathbf{W}_k^{\nu}:\mathbf{W}_k^{\mathbf{x}}]$, then 
\begin{equation}
	\label{inner_time_scale_2}
	\begin{aligned}
		\mathbf{x}_{t+1} &= \mathbf{x}_{t}+\Delta t [\mathbf{W}_k^{\Psi} \bm{\Psi}_t + \mathbf{W}_k^{\nu} \bm{\nu}_t + \mathbf{W}_k^{\mathbf{x}}\mathbf{x}_t]\\
	\end{aligned}
\end{equation}
Equations \eqref{outer_time_scale} and \eqref{inner_time_scale_2} embody the two timescales. 


\begin{figure}[t]
	\centering
	\includegraphics[width=\linewidth]{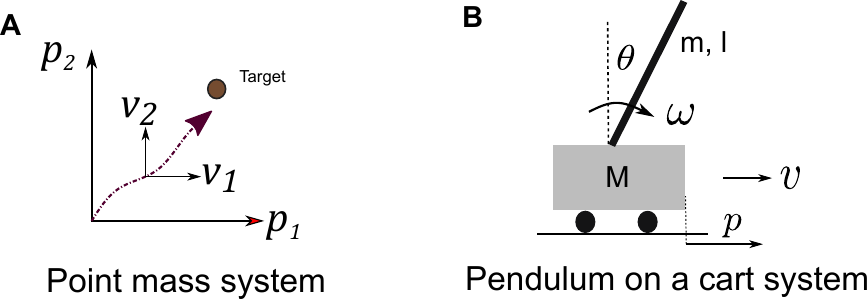}
	\caption{Schematic of tasks to be performed or systems to be controlled. }
\end{figure}

\subsection{Convergence analysis}

In Algorithm \ref{algo}, we have shown the iterative procedure that performs policy evaluation by estimating the state action value function episodically and thereafter updates the policy based on the current evaluation till convergence occurs. We further study conditions that must be satisfied to ensure that this algorithm converges to a policy that lies in proximity to the optimal policy. In \cite{bertsekas1996neuro, bucsoniu2012least, friedrich2019least}, the authors point out that as policy improvement step of this genre of algorithms inherently depend upon the estimate of the state action value function, it is difficult to ascertain the convergence properties. In \cite{ma2011approximate}, it has been shown that convergence analysis can be established in instances where approximate policy iteration is performed using basis function approximation.

In this work, we deal with a model-free optimization problem, where the cost at any time step assumes a quadratic form and the system transitions from one state to the next by linear dynamics whose specific parameters are not known. For this specification, we are able to make arguments regarding convergence. 

\begin{assumption}
	The estimate of the state action value function $\hat{\mathrm{Q}}_{\pi_k}$ determines the subsequent updated policy $\pi_{k+1}$. Therefore, we can define a Lipschitz continuous function $\Gamma_Q$ \cite{perkins2003convergent} such that:
	\begin{equation}
		\pi_{k+1}(\bm{\Omega}_t) = \Gamma_Q(\hat{\mathrm{Q}}_{\pi_k})
		\label{proof_assumption_1}
	\end{equation}
\end{assumption}

\begin{assumption}
	The least squares problem posed in Equation \eqref{error_min} is well-posed and the estimates of the parameters $\bm{\Theta}$ are bounded.
\end{assumption}

\begin{proof}
	We can posit a well-posed least squares problem by choosing episode length $T$ appropriately. We have discussed how to choose an appropriate episode length further in the next section.
\end{proof}

\begin{lemma}
	If $\pi_1$ and $\pi_2$ are two stabilizing policies then, there exists constants $\beta_{\phi}$, $\beta_{c}$ and $\beta_{\Phi}$ such that $||\phi_t^1 - \phi_t^2|| \leq \beta_{\phi} ||\pi_1 - \pi_2||$, $||\mathbf{c}_t^1 - \mathbf{c}_t^2|| \leq \beta_{c} ||\pi_1 - \pi_2||$ and $||\Phi_t^1 - \Phi_t^2|| \leq \beta_{\Phi} ||\pi_1 - \pi_2||$ for all $t$. Here, the superscripts indicate the policies under which the quantities are computed.
	\label{lemma_1}
\end{lemma}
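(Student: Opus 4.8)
The plan is to reduce all three claimed bounds to a single estimate — a Lipschitz-in-the-policy bound on the state trajectory — and then push that estimate through the bilinear algebra defining $\phi_t$, $\mathbf{c}_t$ and $\Phi_t$. Write the two linear policies as $\pi_i(\bm{\Omega}) = \mathbf{W}_i\bm{\Omega}$, identify $\|\pi_1 - \pi_2\|$ with the induced matrix norm $\|\mathbf{W}_1 - \mathbf{W}_2\|$, and let $\mathbf{A}_i \equiv \mathbf{A} + \mathbf{B}\mathbf{W}_i$ denote the closed-loop matrices, which are Schur by the stabilizing hypothesis. The natural reading of the statement — and the one that makes it true — is that $\phi^1_t,\phi^2_t$ (and the other pairs) are evaluated along a common rollout with only the policy changed, i.e. both start from the same $\bm{\Omega}_0$ and use the same exploration realization $\{\varepsilon_t\}$ of Algorithm~\ref{algo}, so that $\bm{\Omega}_{t+1}^i = \mathbf{A}_i\bm{\Omega}_t^i + \mathbf{B}\varepsilon_t$ with $\bm{\Omega}_0^1 = \bm{\Omega}_0^2$ and $\mathbf{u}_t^i = \mathbf{W}_i\bm{\Omega}_t^i + \varepsilon_t$; I would state this explicitly at the top of the proof.

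First I would record uniform boundedness along an episode: since each $\mathbf{A}_i$ is Schur there are $\kappa_i \ge 1$, $\rho_i \in (0,1)$ with $\|\mathbf{A}_i^k\| \le \kappa_i\rho_i^k$, whence $\|\bm{\Omega}_t^i\| \le M$ and $\|\mathbf{u}_t^i\| \le M'$ with $M,M'$ finite and depending only on $\bm{\Omega}_0$, the $\mathbf{W}_i$ and the (finite) noise block. The central step is the trajectory bound: subtracting the two closed-loop recursions and adding and subtracting $\mathbf{A}_1\bm{\Omega}_t^2$ gives
\begin{equation}
\bm{\Omega}_{t+1}^1 - \bm{\Omega}_{t+1}^2 = \mathbf{A}_1(\bm{\Omega}_t^1 - \bm{\Omega}_t^2) + \mathbf{B}(\mathbf{W}_1 - \mathbf{W}_2)\bm{\Omega}_t^2,
\end{equation}
so, unrolling from $\bm{\Omega}_0^1 - \bm{\Omega}_0^2 = 0$ and using the geometric bound on $\|\mathbf{A}_1^k\|$,
\begin{equation}
\|\bm{\Omega}_t^1 - \bm{\Omega}_t^2\| \le \Big(\sum_{s \ge 0}\kappa_1\rho_1^{s}\Big)\|\mathbf{B}\|\,M\,\|\mathbf{W}_1 - \mathbf{W}_2\| = \frac{\kappa_1\|\mathbf{B}\|M}{1-\rho_1}\,\|\mathbf{W}_1 - \mathbf{W}_2\| =: K_\Omega\,\|\mathbf{W}_1 - \mathbf{W}_2\|,
\end{equation}
uniformly in $t$. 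This is the step I expect to be the main obstacle: the naive Gronwall iteration produces a factor $\|\mathbf{A}_1\|^t$, which may blow up even when $\mathbf{A}_1$ is Schur, so one genuinely needs the geometric decay of the powers $\mathbf{A}_1^k$ (together with the cancellation of the initial condition) to get a $t$-independent constant. The same manipulation on $\mathbf{u}_t^i$ (the common $\varepsilon_t$ cancels) then gives $\|\mathbf{u}_t^1 - \mathbf{u}_t^2\| \le (\|\mathbf{W}_1\|K_\Omega + M)\|\mathbf{W}_1 - \mathbf{W}_2\| =: K_u\,\|\mathbf{W}_1 - \mathbf{W}_2\|$.

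Finally I would propagate these into the quadratic features. Let $z_t^i$ be the concatenation of $\bm{\Omega}_t^i$ and $\mathbf{u}_t^i$, so $\phi_t^i = z_t^i \otimes z_t^i$ and $\mathbf{c}_t^i = (z_t^i)^T\,\mathrm{blkdiag}(\mathbf{Q},\mathbf{R})\,z_t^i$, with $\|z_t^i\| \le N := (M^2 + M'^2)^{1/2}$ and $\|z_t^1 - z_t^2\| \le (K_\Omega^2 + K_u^2)^{1/2}\|\mathbf{W}_1 - \mathbf{W}_2\| =: K_z\|\mathbf{W}_1 - \mathbf{W}_2\|$ from the previous step. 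The bilinear identity $a\otimes a - b\otimes b = (a-b)\otimes a + b\otimes(a-b)$ together with $\|x\otimes y\| = \|x\|\|y\|$ yields $\|\phi_t^1 - \phi_t^2\| \le 2NK_z\|\mathbf{W}_1 - \mathbf{W}_2\|$, i.e. $\beta_\phi = 2NK_z$; the analogous identity for quadratic forms gives $\|\mathbf{c}_t^1 - \mathbf{c}_t^2\| \le 2\|\mathrm{blkdiag}(\mathbf{Q},\mathbf{R})\|\,NK_z\|\mathbf{W}_1 - \mathbf{W}_2\|$, i.e. $\beta_c$. For $\Phi_t = \phi_t - \gamma\,(w_t\otimes w_t)$ with $w_t^i$ the concatenation of $\bm{\Omega}_{t+1}^i$ and $\mathbf{W}_i\bm{\Omega}_{t+1}^i$, the same estimates (now using $\|\bm{\Omega}_{t+1}^1 - \bm{\Omega}_{t+1}^2\| \le K_\Omega\|\mathbf{W}_1 - \mathbf{W}_2\|$ and $\|w_t^1 - w_t^2\| \le (\|\mathbf{W}_1\|K_\Omega + M + K_\Omega^2)^{1/2}\cdots$) give $\|w_t^1\otimes w_t^1 - w_t^2\otimes w_t^2\| \le 2NK_w\|\mathbf{W}_1 - \mathbf{W}_2\|$ for a suitable $K_w$, and the triangle inequality delivers $\beta_\Phi = \beta_\phi + 2\gamma NK_w$. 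All the constants are finite and $t$-independent within an episode; if one further restricts $\pi_1,\pi_2$ to a compact set of Schur-stabilizing policies, the $\kappa_i,\rho_i$ — and hence all three $\beta$'s — can be taken uniform over that set, which is the form the convergence argument will actually use.
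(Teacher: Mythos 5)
Your proof is correct, but it is genuinely different from what the paper does: the paper gives no argument at all for this lemma, stating only that it ``follows from'' Perkins and Precup (2003), a reference that establishes Lipschitz-continuity properties for approximate policy iteration in a general MDP setting with bounded rewards. You instead exploit the specific linear--quadratic structure: writing $\pi_i(\bm{\Omega})=\mathbf{W}_i\bm{\Omega}$, using Schur stability of $\mathbf{A}+\mathbf{B}\mathbf{W}_i$ to get a $t$-uniform bound $\|\bm{\Omega}_t^1-\bm{\Omega}_t^2\|\le K_\Omega\|\mathbf{W}_1-\mathbf{W}_2\|$ via the convolution sum $\sum_s \mathbf{A}_1^{t-1-s}\mathbf{B}(\mathbf{W}_1-\mathbf{W}_2)\bm{\Omega}_s^2$, and then pushing this through the bilinear identities $a\otimes a-b\otimes b=(a-b)\otimes a+b\otimes(a-b)$. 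This buys explicit constants, makes visible exactly where stabilizability is used (without geometric decay of $\|\mathbf{A}_1^k\|$ the naive iteration gives a $t$-dependent constant), and forces you to state the hidden hypothesis that the two trajectories share an initial condition and exploration realization --- an assumption the paper never articulates but which the lemma needs, since in Algorithm~1 consecutive episodes are actually rolled out from freshly sampled $\bm{\Omega}_0\sim\mathcal{D}$ with independent noise. Two small caveats: your constants $M,M'$ are only finite over a single episode of length $T$ (a Gaussian exploration sequence is not almost surely bounded over an infinite horizon, so ``for all $t$'' should be read as ``for all $t$ within an episode'' or with $\varepsilon_t$ suppressed), and the expression for $K_w$ in the last paragraph is left as an ellipsis and should be written out; neither affects the validity of the argument.
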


\begin{proof}
	The proof for this Lemma follows from \cite{perkins2003convergent}.
\end{proof}

\begin{lemma}
	If $\pi_k$ is a sequence of stabilizing policies, then the sequence $\pi_k$ converges as $k \rightarrow \infty$ if Assumptions 1, 2 and Lemma \ref{lemma_1} holds.
	
	\label{lemma_2}
\end{lemma}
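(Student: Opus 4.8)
\textbf{Proof proposal for Lemma~\ref{lemma_2}.}
The plan is to show that the map $F$ sending a policy to its successor in Algorithm~\ref{algo}, $\pi_{k+1}=F(\pi_k)$, is a contraction on the (finite-dimensional, hence complete) space of linear stabilizing policies $\mathbf{W}$, so that $\{\pi_k\}$ is Cauchy and therefore convergent by the Banach fixed-point theorem. I would decompose $F$ into three stages and establish Lipschitz continuity of each: (i) data generation $\pi_k\mapsto(\mathbf{C}_k,\bm{\Phi}_k)$; (ii) the least-squares solve and unpacking $(\mathbf{C}_k,\bm{\Phi}_k)\mapsto\hat{\bm{\Theta}}_k\mapsto\hat{\mathbf{H}}_{\pi|k}\mapsto\hat{\mathrm{Q}}_{\pi_k}$; and (iii) the policy update $\hat{\mathrm{Q}}_{\pi_k}\mapsto\pi_{k+1}$.

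For stage (i), since every $\pi_k$ is assumed stabilizing, Lemma~\ref{lemma_1} applied to the consecutive pair $(\pi_{k-1},\pi_k)$ and summed over the $T$ samples of an episode gives $\|\mathbf{C}_k-\mathbf{C}_{k-1}\|\le\sqrt{T}\,\beta_c\|\pi_k-\pi_{k-1}\|$ and $\|\bm{\Phi}_k-\bm{\Phi}_{k-1}\|\le\sqrt{T}\,\beta_{\Phi}\|\pi_k-\pi_{k-1}\|$. For stage (ii), $\hat{\bm{\Theta}}_k=(\bm{\Phi}_k^T)^{\dagger}\mathbf{C}_k^T$: Assumption~2 guarantees the least-squares problem is well-posed, i.e.\ $\bm{\Phi}_k$ has full row rank with smallest singular value bounded away from zero and the estimates are bounded, so on this set of matrices the Moore--Penrose pseudoinverse is Lipschitz (standard perturbation bound for pseudoinverses of matrices of locally constant rank). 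Combining this with the boundedness of $\mathbf{C}_k$ via $\|\hat{\bm{\Theta}}_k-\hat{\bm{\Theta}}_{k-1}\|\le\|(\bm{\Phi}_k^T)^{\dagger}\|\,\|\mathbf{C}_k-\mathbf{C}_{k-1}\|+\|(\bm{\Phi}_k^T)^{\dagger}-(\bm{\Phi}_{k-1}^T)^{\dagger}\|\,\|\mathbf{C}_{k-1}\|$ yields a Lipschitz constant $L_{\Theta}$ for $\pi_{k-1}\mapsto\hat{\bm{\Theta}}_k$; unpacking $\hat{\bm{\Theta}}_k$ into $\hat{\mathbf{H}}_{\pi|k}$ and forming the quadratic form $\hat{\mathrm{Q}}_{\pi_k}$ are linear, hence Lipschitz with some constant $L_Q$. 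For stage (iii), Assumption~1 provides exactly the Lipschitz map $\Gamma_Q$ with $\pi_{k+1}=\Gamma_Q(\hat{\mathrm{Q}}_{\pi_k})$, say with constant $L_{\Gamma}$.

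Chaining the three bounds gives $\|\pi_{k+1}-\pi_k\|\le L\,\|\pi_k-\pi_{k-1}\|$ with $L=L_{\Gamma}L_Q L_{\Theta}\sqrt{T}(\beta_c+\beta_{\Phi})$ (schematically, absorbing $\|(\bm{\Phi}^T)^{\dagger}\|$ and the pseudoinverse-perturbation constant into $L_{\Theta}$). If $L<1$ — which I would argue holds for a suitable choice of the algorithm's hyperparameters (episode length $T$, discount $\gamma$, exploration variance $\sigma_y^2$, and the weighting matrices $\mathbf{Q},\mathbf{R},\mathbf{S}$ that shape $\Gamma_Q$) — then $\{\pi_k\}$ is Cauchy in the space of policy matrices and converges to a fixed point $\pi_{\infty}=F(\pi_{\infty})$.

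The step I expect to be the main obstacle is making the contraction constant genuinely less than one without circularity: the constants from Lemma~\ref{lemma_1} and from the pseudoinverse bound depend on the region of policy space under consideration and on the random exploration signal $\varepsilon_t$, so one must either (a) restrict attention to a neighborhood of the current derived policy and show $F$ maps it into itself — which is consistent with the paper's design of keeping iterates near the current policy — or (b) establish the bound in expectation over $\varepsilon_t$ and then pass to an almost-sure statement. Quantitatively controlling the dependence of $L_{\Theta}$ on the conditioning of $\bm{\Phi}_k$, which itself varies with both the policy and the exploration realization, is the delicate point; the remainder is routine composition of Lipschitz maps followed by the contraction mapping theorem.
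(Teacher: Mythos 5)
Your proposal follows essentially the same route as the paper's proof: both chain Lipschitz bounds through the same three stages (data generation via Lemma~\ref{lemma_1}, the least-squares solve, and the policy update via Assumption~1) to obtain a recursion of the form $\|\pi_{k+1}-\pi_k\|\le\beta\,\|\pi_k-\pi_{k-1}\|$ and conclude convergence under the unverified condition $\beta<1$. If anything, your treatment of the least-squares stage via a pseudoinverse perturbation bound is more careful than the paper's, which divides by an upper bound $\zeta_{\Phi}$ on $\|\bm{\Phi}\|$ where a lower bound on its smallest singular value is what is actually required, and your explicit flagging of the difficulty in certifying the contraction constant is genuinely less than one corresponds to the step the paper simply assumes.
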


\begin{proof}
	Let us define a function $\xi_{k+1}(\bm{\Omega}_t) = ||\pi_{k+1}(\bm{\Omega}_t) - \pi_k(\bm{\Omega}_t)||$. Using \eqref{proof_assumption_1}, we can write:
	
	\begin{equation}
		\xi_{k+1}(\bm{\Omega}_t) \equiv ||\pi_{k+1} - \pi_k|| \leq \beta_Q ||\hat{
			\mathrm{Q}}_{\pi_k} - \hat{\mathrm{Q}}_{\pi_{k - 1}}||
		\label{Lipschitz_1}
	\end{equation}
	where $\beta_Q$ is the Lipschitz constant.
	
	Now, using \eqref{approximate} and Lemma \eqref{lemma_1}, we can further write: 
	\begin{equation}
		\begin{aligned}
			\xi_{k+1}(\bm{\Omega}_t) &\leq \beta_Q (\zeta_{\phi}||\hat{\bm{\Theta}}_k - \hat{\bm{\Theta}}_{k-1} || + \zeta_{\Theta}\beta_{\phi}||\pi_k - \pi_{k - 1}||)\\
			&= \beta_Q (\zeta_{\phi}||\hat{\bm{\Theta}}_k - \hat{\bm{\Theta}}_{k-1} || + \zeta_{\Theta}\beta_{\phi} \xi_k(\bm{\Omega}_t))
		\end{aligned}
		\label{proof_assumption_2}
	\end{equation}
	where, under stabilizing policies, we have $||\phi_t|| \leq \zeta_{\phi}$ and $||\hat{\bm{\Theta}}|| \leq \zeta_{\Theta}$. Then, it follows from Equations \eqref{error} and \eqref{error_min}: 
	\begin{equation}
		\begin{aligned}
			\bm{\Phi}_k^T \hat{\bm{\Theta}}_k - \bm{\Phi}_{k-1}^T \hat{\bm{\Theta}}_{k-1} &= \mathbf{C}_k^T - \mathbf{C}_{k-1}^T\\
			\bm{\Phi}_k^T (\hat{\bm{\Theta}}_k - \hat{\bm{\Theta}}_{k-1})+(\bm{\Phi}_k^T - \bm{\Phi}_{k-1}^T)\hat{\bm{\Theta}}_{k-1} &= \mathbf{C}_k^T - \mathbf{C}_{k-1}^T
		\end{aligned}
		\label{proof_assumption_3}
	\end{equation}
	For definition of $\bm{\Phi}_k$ and $\mathbf{C}_k$ refer to the Problem Formulation section. Combining Equation \eqref{proof_assumption_3} and Lemma \ref{lemma_1}, we can state:
	\begin{equation}
		\begin{aligned}
			||\hat{\bm{\Theta}}_k - \hat{\bm{\Theta}}_{k-1} || &\leq \frac{(\beta_c - \zeta_{\Theta}\beta_{\Phi})}{\zeta_{\Phi}}||\pi_k - \pi_{k - 1}|| \\
			&= \frac{(\beta_c - \zeta_{\Theta}\beta_{\Phi})}{\zeta_{\Phi}}\xi_k(\bm{\Omega}_t)
		\end{aligned}
		\label{proof_assumption_4}
	\end{equation}
	Here, $||\bm{\Phi}|| \leq \zeta_{\Phi}$. Using Equation \eqref{proof_assumption_4} in \eqref{proof_assumption_2}, we can write:
	\begin{equation}
		\begin{aligned}
			\xi_{k+1}(\bm{\Omega}_t) &\leq \beta_Q (\zeta_{\phi}\frac{(\beta_c - \zeta_{\Theta}\beta_{\Phi})}{\zeta_{\Phi}} + \zeta_{\Theta}\beta_{\phi}) \xi_k(\bm{\Omega}_t) \\
			&= \beta\xi_k(\bm{\Omega}_t)
		\end{aligned}
		\label{proof_final}
	\end{equation}
	
	Here, $\beta = \beta_Q (\zeta_{\phi}\frac{(\beta_c - \zeta_{\Theta}\beta_{\Phi})}{\zeta_{\Phi}} + \zeta_{\Theta}\beta_{\phi})$. If $0 <\beta<1$, then $\xi_k(\bm{\Omega}_t)$ goes to zero as $k \rightarrow \infty$. 
\end{proof}
At this point, we have characterized circumstances under which the policy iteration converges. Next, we show that the policy $\pi_k$ approaches the optimal policy $\pi_*$. This can be established following from the Lemma \cite{singh1994upper, kakade2003sample}:
\begin{lemma}
	Let $\mathrm{Q}_*$ is the optimal state-action value function and we have an estimate $\hat{\mathrm{Q}}$ of $\mathrm{Q}_*$ such that the deviation of $\hat{\mathrm{Q}}$ is bounded by $\epsilon$, i.e., $||\hat{\mathrm{Q}} - \mathrm{Q}_*|| \leq \epsilon$. Let $\pi$ be the policy wrt $\hat{\mathrm{Q}}$. Then for all states $\bm{\Omega}_t$, $||\mathrm{V}_* - \mathrm{V}_{\pi}||\leq \epsilon'$ and $\epsilon'$ depends on $\epsilon$. 
	\label{lemma_3}
\end{lemma}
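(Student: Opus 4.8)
The plan is to adapt the classical argument bounding policy loss in terms of value-function approximation error (as in \cite{singh1994upper}), specialized to the discounted cost-minimization setting of \eqref{cost_to_go_discounted}. Throughout I work with the sup norm over the region of the augmented state space on which the iterates of Algorithm \ref{algo} live, so that all value functions below are finite; a prerequisite I would discharge first is that the greedy (cost-minimizing) policy $\pi$ associated with $\hat{\mathrm{Q}}$ is itself stabilizing — and hence that $\mathrm{V}_{\pi}$ is well defined — which I would argue from Assumption 2 together with the neighborhood-exploration construction of \eqref{exploration} that keeps derived policies close to stabilizing ones.

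First I would introduce the policy Bellman operator $(\mathcal{T}_{\pi}\mathrm{V})(\bm{\Omega}) = \mathbf{c}(\bm{\Omega},\pi(\bm{\Omega})) + \gamma \mathrm{V}(\mathbf{A}\bm{\Omega}+\mathbf{B}\pi(\bm{\Omega}))$ and the optimal operator $\mathcal{T}_* = \min_{\pi}\mathcal{T}_{\pi}$, recording that each is a $\gamma$-contraction in sup norm with fixed points $\mathrm{V}_{\pi}$ and $\mathrm{V}_*$ respectively, that $\mathrm{V}_{\pi}\ge \mathrm{V}_*$ pointwise since $\mathrm{V}_*$ is the minimal cost-to-go, and that $(\mathcal{T}_{\pi}\mathrm{V}_*)(\bm{\Omega}) = \mathrm{Q}_*(\bm{\Omega},\pi(\bm{\Omega}))$ directly from the definition \eqref{Q_val} of $\mathrm{Q}_*$. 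The key step is the two-sided comparison at the greedy action: because $\pi$ minimizes $\hat{\mathrm{Q}}(\bm{\Omega},\cdot)$ and $\|\hat{\mathrm{Q}}-\mathrm{Q}_*\|\le\epsilon$,
\[
\begin{aligned}
\mathrm{Q}_*(\bm{\Omega},\pi(\bm{\Omega})) &\le \hat{\mathrm{Q}}(\bm{\Omega},\pi(\bm{\Omega})) + \epsilon \le \hat{\mathrm{Q}}(\bm{\Omega},\pi_*(\bm{\Omega})) + \epsilon \\
&\le \mathrm{Q}_*(\bm{\Omega},\pi_*(\bm{\Omega})) + 2\epsilon = \mathrm{V}_*(\bm{\Omega}) + 2\epsilon,
\end{aligned}
\]
so $(\mathcal{T}_{\pi}\mathrm{V}_*)(\bm{\Omega}) - \mathrm{V}_*(\bm{\Omega}) \le 2\epsilon$ for every state.

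The remainder is a telescoping estimate: $\mathrm{V}_{\pi} - \mathrm{V}_* = \mathcal{T}_{\pi}\mathrm{V}_{\pi} - \mathcal{T}_*\mathrm{V}_* = (\mathcal{T}_{\pi}\mathrm{V}_{\pi} - \mathcal{T}_{\pi}\mathrm{V}_*) + (\mathcal{T}_{\pi}\mathrm{V}_* - \mathrm{V}_*)$. I would bound the first bracket by $\gamma\|\mathrm{V}_{\pi}-\mathrm{V}_*\|$ using the contraction, the second by $2\epsilon$ from the display, take the supremum over $\bm{\Omega}$ (using $\mathrm{V}_{\pi}\ge\mathrm{V}_*$ to drop absolute values), and solve the scalar inequality $\|\mathrm{V}_{\pi}-\mathrm{V}_*\| \le \gamma\|\mathrm{V}_{\pi}-\mathrm{V}_*\| + 2\epsilon$ to obtain $\|\mathrm{V}_*-\mathrm{V}_{\pi}\| \le \tfrac{2\epsilon}{1-\gamma} \equiv \epsilon'$, exhibiting the claimed dependence of $\epsilon'$ on $\epsilon$.

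The main obstacle I anticipate is not the algebra above but the functional-analytic setup it presupposes: unlike the finite-state case of \cite{singh1994upper}, here $\bm{\Omega}\in\mathbb{R}^{2m+n}$ and the value functions are unbounded quadratics, so the naive global sup norm is infinite and the contraction estimates are vacuous. I would handle this by restricting attention to a sublevel set of $\mathrm{V}_*$ large enough to contain the trajectories generated under all policies considered during Algorithm \ref{algo} — finite by the stabilizing-policy hypothesis of Lemma \ref{lemma_2} and the bounded-exploration mechanism — on which the relevant norms are finite; verifying forward-invariance of such a set under the greedy policy $\pi$, and that $\gamma<1$ strictly so the geometric series closes, is the delicate part of the argument. (If the degenerate case $\gamma = 1$ is to be admitted, a separate argument via a weighted norm or an explicit horizon truncation would be needed.)
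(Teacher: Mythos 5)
Your proposal is correct and follows essentially the same route as the paper's own proof in the Appendix: the two-sided comparison at the greedy action giving $\mathrm{Q}_*(\bm{\Omega}_t,\pi(\bm{\Omega}_t)) - \mathrm{V}_*(\bm{\Omega}_t) \leq 2\epsilon$ is identical, and your contraction/fixed-point step $\|\mathrm{V}_{\pi}-\mathrm{V}_*\| \leq \gamma\|\mathrm{V}_{\pi}-\mathrm{V}_*\| + 2\epsilon$ is just the sup-norm packaging of the paper's pointwise recursion $\mathrm{V}_{\pi}(\bm{\Omega}_t)-\mathrm{V}_*(\bm{\Omega}_t) \leq \gamma[\mathrm{V}_{\pi}(\bm{\Omega}_{t+1})-\mathrm{V}_*(\bm{\Omega}_{t+1})]+2\epsilon$, both yielding $\epsilon' = \frac{2\epsilon}{1-\gamma}$. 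Your added caveat about the norm being finite only on an invariant sublevel set (since the value functions are unbounded quadratics on $\mathbb{R}^{2m+n}$) is a legitimate refinement the paper leaves implicit, but it does not change the argument.
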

\begin{proof}
	The proof of this Lemma follows from \cite{singh1994upper, kakade2003sample} and has been included in the Appendix of this paper for completeness.
\end{proof}

This indicates that if the policy iteration converges to any policy $\pi$, which is stabilizing and such that the deviation of the corresponding state action value function $\hat{\mathrm{Q}}$ from the optimal state action value function $\mathrm{Q}_*$ can be bounded by $\epsilon$, then our policy does not get any further away from the optimal policy by a factor of $\epsilon$.

On the basis of the above intermediate results, we are now ready to state the conditions under which our proposed algorithm converges.

\begin{theorem}
	If $\pi_k$ is a sequence of stabilizing policies, then $\pi_k$ converges to a policy in the neighborhood of the optimal policy $\pi_*$ as $k \rightarrow \infty$, provided there exists a bounded solution to the least squares problem at each episode.
\end{theorem}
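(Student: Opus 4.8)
The plan is to assemble the theorem from the three lemmas already in hand, with Lemma~\ref{lemma_2} supplying convergence of the policy sequence and Lemma~\ref{lemma_3} certifying that the limit lies near $\pi_*$. First I would note that the hypothesis ``there exists a bounded solution to the least squares problem at each episode'' is exactly Assumption~2, while Assumption~1 and Lemma~\ref{lemma_1} are standing facts; hence along the given sequence of stabilizing policies the hypotheses of Lemma~\ref{lemma_2} are met. The point needing care is that the proof of Lemma~\ref{lemma_2} only yields $\xi_{k+1}(\bm{\Omega}_t)\le \beta\,\xi_k(\bm{\Omega}_t)$ and concludes $\xi_k\to 0$ only when $0<\beta<1$. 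So I would make explicit that, under stabilizing policies, the constants $\beta_Q,\beta_c,\beta_\phi,\beta_\Phi,\zeta_\phi,\zeta_\Theta,\zeta_\Phi$ entering $\beta=\beta_Q(\zeta_\phi(\beta_c-\zeta_\Theta\beta_\Phi)/\zeta_\Phi+\zeta_\Theta\beta_\phi)$ are all finite — this is where boundedness of the least squares solution is used, via $\|\hat{\bm\Theta}\|\le\zeta_\Theta$ in \eqref{proof_assumption_2}–\eqref{proof_assumption_4} — and that the regime of interest is the one in which $\beta<1$, so that $\{\pi_k\}$ (equivalently $\{\mathbf{W}_k\}$) is Cauchy in the induced operator norm and converges to some policy $\pi_\infty$.

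Next I would argue that $\pi_\infty$ is itself stabilizing: it is the limit of stabilizing $\mathbf{W}_k$, and since the set of $\mathbf{W}$ for which $\mathbf{A}+\mathbf{B}\mathbf{W}$ has spectral radius below one is open, a tail of the sequence remains in it and the limit is at worst on its boundary; one strengthens the statement by taking $\pi_\infty$ interior, which is the generic case. Correspondingly I would show the estimated state-action value matrix $\hat{\mathbf H}_{\pi|k}$ — equivalently $\hat{\bm\Theta}_k$ — converges, using the Lipschitz bounds of Lemma~\ref{lemma_1} and inequality~\eqref{proof_assumption_4}, which shows $\|\hat{\bm\Theta}_k-\hat{\bm\Theta}_{k-1}\|$ is controlled by $\xi_k(\bm{\Omega}_t)\to 0$. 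Call the limit $\hat{\mathrm Q}_{\pi_\infty}$.

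Then I would invoke Lemma~\ref{lemma_3}. Because the per-episode least squares problem is well posed with bounded solution — enforced by a sufficiently long horizon $T$ and by the exploratory perturbation $\varepsilon_t$ of \eqref{exploration} providing persistent excitation of $\bm\Phi_k$ — the fitted $\hat{\mathrm Q}_{\pi_\infty}$ deviates from the optimal $\mathrm Q_*$ by at most some $\epsilon$, so Lemma~\ref{lemma_3} gives $\|\mathrm V_*-\mathrm V_{\pi_\infty}\|\le\epsilon'(\epsilon)$; that is, $\pi_\infty$ lies in an $\epsilon'$-neighborhood of $\pi_*$. Combining $\pi_k\to\pi_\infty$ with $\pi_\infty$ being $\epsilon'$-close to $\pi_*$ yields the theorem.

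The genuine obstacle is the quantitative content hidden in ``provided there exists a bounded solution'': one must connect the bound $\|\hat{\bm\Theta}\|\le\zeta_\Theta$ (and the closed-loop spectral radii under each $\pi_k$) to the contraction condition $\beta<1$, and separately connect well-posedness of the least squares fit to the magnitude of $\epsilon$ in Lemma~\ref{lemma_3}. The exploration scheme \eqref{exploration} is the mechanism that makes both work, but pinning down the dependence of $\epsilon$ on $T$, $\sigma_y^2$, and the conditioning of $\bm\Phi_k$ is the technical heart of the argument — precisely what the remark ``choose $T$ appropriately'' defers to the following section, so in this proof I would invoke it as a hypothesis rather than derive it.
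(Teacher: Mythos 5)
Your proposal follows essentially the same route the paper intends: the theorem is assembled directly from Lemma~\ref{lemma_2} (convergence of the stabilizing policy sequence, with the boundedness hypothesis playing the role of Assumption~2) and Lemma~\ref{lemma_3} (the limit is within an $\epsilon'$-neighborhood of $\pi_*$), which is precisely how the paper positions its ``intermediate results'' before stating the theorem without a separate proof. Your added observations --- that the contraction constant $\beta<1$ is an unstated requirement rather than a consequence of the hypotheses, and that the size of $\epsilon$ in Lemma~\ref{lemma_3} is never tied quantitatively to $T$ and $\sigma_y^2$ --- correctly identify the same gaps present in the paper's own argument rather than introducing new ones.
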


\begin{figure}[t]
	\centering
	\includegraphics[width=\linewidth]{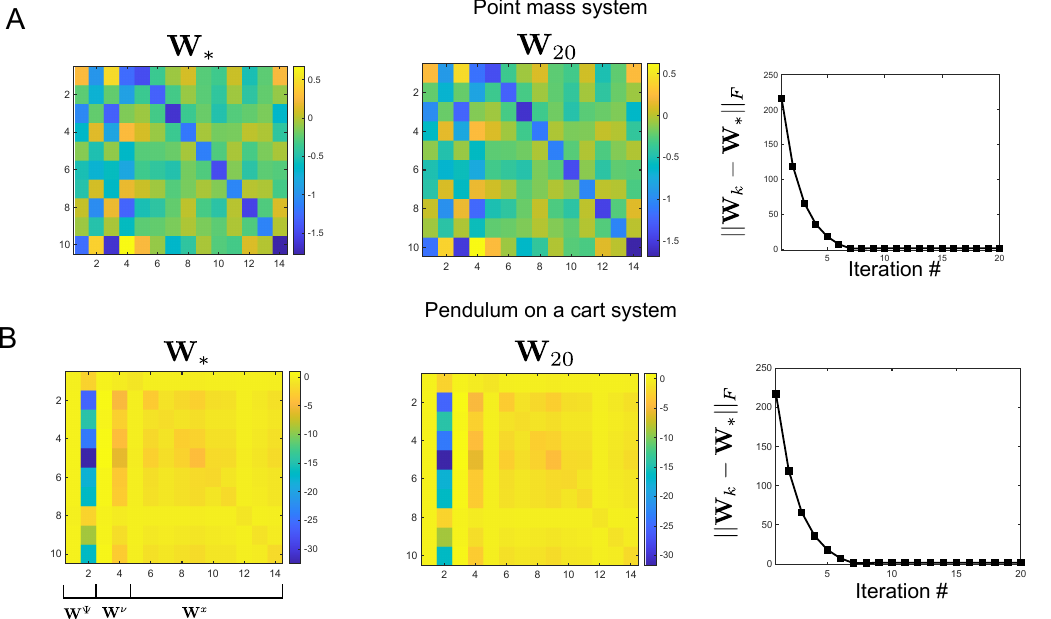}
	\caption{(Left) Feedback matrix for known model dynamics (Middle) Feedback matrix from model free policy iteration. (Right) Convergence behavior of the algorithm over iterations. A. Point mass system and B. Pendulum on a cart system. In the bottom left panel, we have shown the columns which correspond to the sub-matrices $\mathbf{W}_{\Psi}$, $\mathbf{W}_{\nu}$ and $\mathbf{W}_x$ of equation \eqref{inner_time_scale_2}.}
	\label{convergence_fig}
\end{figure}

Figure \ref{convergence_fig} shows how the algorithmic progresses to find the optimal feedback matrix in our two numerical example systems.

\subsection{Numerical examples}

Prior to convergence analysis, we explore two numerical examples to gain some intuition for the Algorithm: (i) spatial navigation of a unit point mass, and (b) stabilization of a pendulum in an inverted position (see Figure \ref{schematic_decoding}). In Figure \ref{synaptic_weights}A, B, we have shown an how the network activity evolves when it solves these tasks. For the ease of description here, we have shown in Figure \ref{synaptic_weights}B, how nine randomly chosen elements of the feedback matrix $\mathbf{W}_k$ adapts episodically. We now delve into these examples in more detail.


\subsubsection{Spatial navigation of a point mass}

In the first numerical example, we look at the spatial navigation of a unit point mass (see Figure \ref{point_mass_fig} A). Without any loss of generality we assume that the point mass moves in a two-dimensional plane i.e., $m = 2$. The motion of the point-mass is governed by the following equations:
\begin{equation}
	\label{point_mass}
	\mathbf{p}_{t+1} = \mathbf{p}_t +\Delta t \mathbf{v}
\end{equation}

and,
\begin{equation}
	\label{force_point_mass}
	\mathbf{v}_{t+1} = (1 - \Delta t\lambda_v )\mathbf{v}_t +  \Delta t \mathbf{b}\mathbf{x}_t
\end{equation}
Here, the variables $\mathbf{p}_t$ and $\mathbf{v}_t$ correspond to position and velocity, respectively, of the point mass. $0<\lambda_v<1$ captures possible dissipation due to friction during motion. The goal here is to drive the point mass to a fixed target location $\mathbf{p}_T$ in the plane starting from the origin of the plane in an energy efficient manner (see Equation \eqref{center_out_cost}), leading to the cost:
\begin{equation}
	\label{center_out_cost}
	\mathbb{J} = \frac{1}{2}\sum_{t = 0}^{\infty}[(\mathbf{p}_t - \mathbf{p}_T)^T\mathbf{Q}_1(\mathbf{p}_t - \mathbf{p}_T) + \mathbf{x}_t^T\mathbf{S}_1\mathbf{x}_t + \mathbf{u}_t^T\mathbf{R}_1\mathbf{u}_t]
\end{equation}
This control task is motivated by a standard experiment in neuroscience: the center-out reaching task \cite{so2012redundant, sanchez2011control}. We find that beginning from an initial guess, the network through adaptation of feedback weights quickly learns the optimal policy to navigate to $\mathbf{p}_T$ (see Figure \ref{point_mass_fig} B-C).
\begin{figure}[b]
	\centering
	\includegraphics[width = \linewidth]{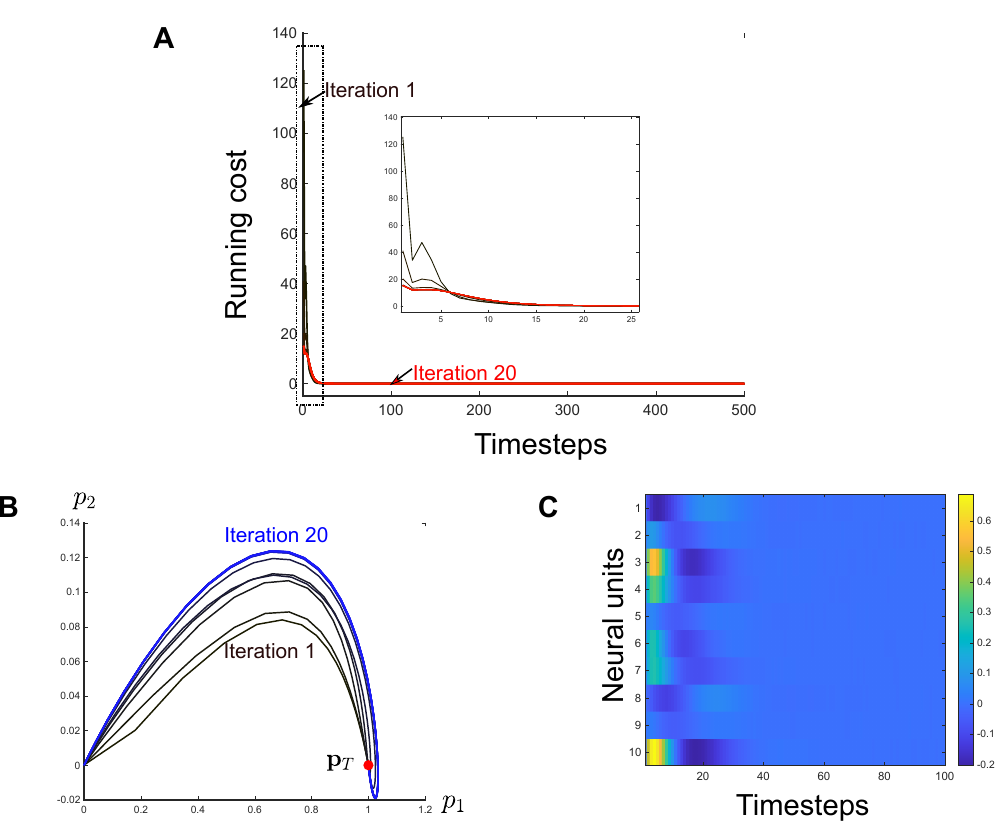}
	\caption{\textbf{Point mass system} A-B. The network quickly learns the optimal strategy to perform the task. (The running cost in the first 25 timesteps of each iteration is shown inset in A). C. Activity of the network after convergence to an optimal strategy. }
	\label{point_mass_fig}
\end{figure}
For the simulations we have $n = 10$, $\lambda_v = 0.25$, $\mathbf{p}_T = [1, 0]^T$, $\Delta t = 0.1$ and weights of $\mathbf{b} \in \mathbb{R}^{m \times n}$ chosen from a uniform random distribution. We additionally have $\sigma_y^2 = 0.01$, $\gamma = 0.99$, $\mathbf{Q}_1 = 10\mathbf{I}_m$, $\mathbf{S}_1 = 2\mathbf{I}_n$ and $\mathbf{R}_1 = 2\mathbf{I}_n$. To estimate the state action value function using least squares we collect data for an episode of length $T = 500$ and thereafter update the policy based on the current estimate of $\mathrm{Q}_{\pi}$. Details of what the corresponding $A_{\psi}$, $A_{\nu}$, $B_x$ and $C$ matrices are for this problem are outlined in the Appendix of the paper.

\subsubsection{Inverted pendulum on a cart}

In the second example, we examine the classical problem of stabilizing an inverted pendulum. The system comprises of a pendulum of mass $\mathrm{m}$, length $\mathrm{l}$ and moment of inertia $I$ mounted on a mobile cart of mass $\mathrm{M}$ (see Figure \ref{inverted_pendulum}A). 
We linearized dynamics of this system under small angle approximations, resulting in:
\begin{equation}
	\label{pendulum_1}
	p_{t+1} = p_t +\Delta t v_t
\end{equation}
\begin{equation}
	\label{pendulum_2}
	\theta_{t+1} = \theta_t + \Delta t \omega_t
\end{equation}
\begin{equation}
	\label{pendulum_3}
	\begin{split}
		v_{t+1} = v_t + \Delta t( \frac{-(I+\mathrm{ml}^2)bv_t+\mathrm{m}^2g\mathrm{l}^2 \theta_t}{I(\mathrm{M}+\mathrm{m})+\mathrm{Mml}^2} +\\ \frac{I+\mathrm{ml}^2}{I(\mathrm{M}+\mathrm{m})+\mathrm{Mml}^2}\mathbf{b}\mathbf{x}_t)
	\end{split}
\end{equation}
and, 
\begin{equation}
	\label{pendulum_4}
	\begin{split}
		\omega_{t+1} = \omega_t + \Delta t( \frac{-(\mathrm{ml})bv_t+\mathrm{mgl(M+m)} \theta_t}{I(\mathrm{M}+\mathrm{m})+\mathrm{Mml}^2} +\\
		 \frac{\mathrm{ml}}{I(\mathrm{M}+\mathrm{m})+\mathrm{Mml}^2}\mathbf{b} \mathbf{x}_t)
	\end{split}
\end{equation}
\begin{figure}[b]
	\centering
	\includegraphics[width = \linewidth]{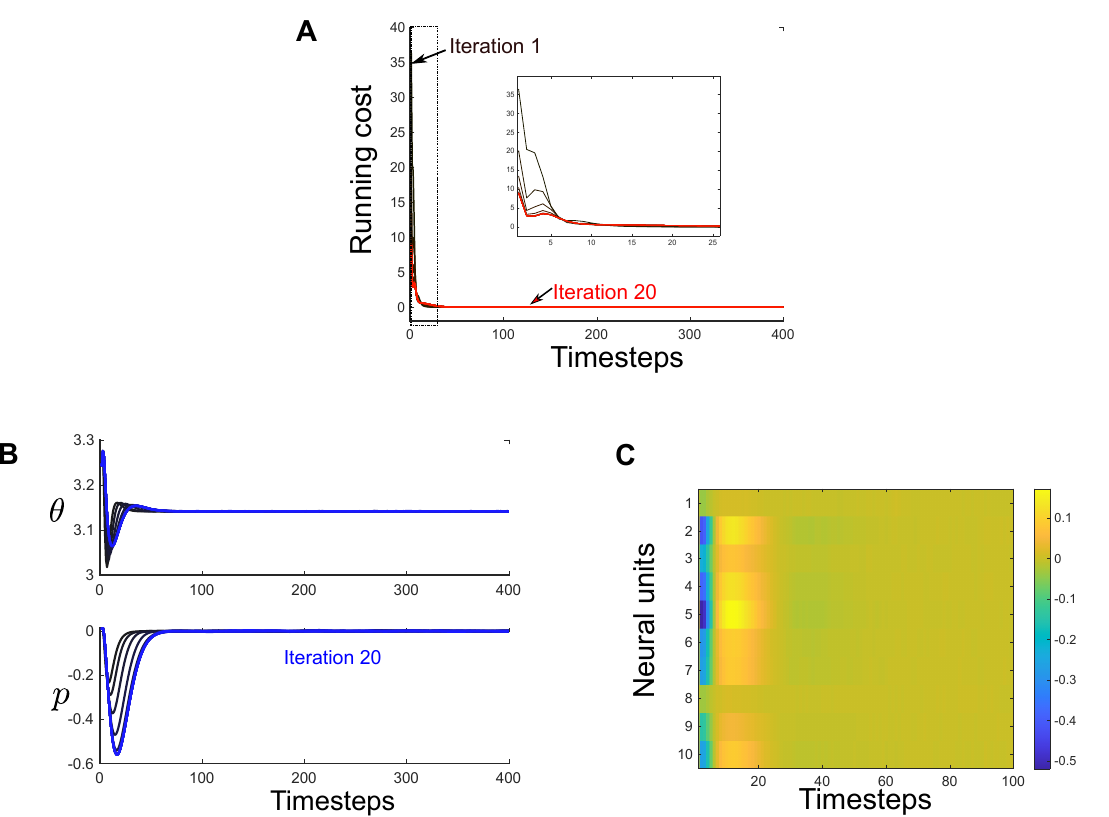}
	\caption{\textbf{Pendulum on a cart system} A-B. The network through episodic adaptation quickly learns the optimal strategy to perform the task. (The running cost in the first 25 timesteps of each iteration is shown inset in A). C. Activity of the network under the learned optimal strategy. }
	\label{inverted_pendulum}
\end{figure}

Here the variables $p_t$, $v_t$, $\theta_t$, $\omega_t$ correspond to position of the cart, velocity of the cart, angle from the vertically upright position and angular velocity respectively. $b$ here corresponds the coefficient of friction for the cart. The goal here is to stabilize the pendulum in the unstable equilibrium pointing up with minimum displacement of the cart (see Equation \eqref{pendulum_cost}). We find once again that the network adapts to perform the task optimally provided that the initial position of the pendulum is within 10 degrees of the vertical position (see Figure \ref{inverted_pendulum}). 
\begin{equation}
	\label{pendulum_cost}
	\mathbb{J} = \frac{1}{2}\sum_{t = 0}^{\infty}[\rho_pp_t^2+\rho_vv_t^2+\rho_{\theta}\theta_t^2+\rho_{\omega}\omega_t^2 + \mathbf{x}_t^T\mathbf{S}_2\mathbf{x}_t + \mathbf{u}_t^T\mathbf{R}_2\mathbf{u}_t]dt
\end{equation}
For the simulations here we have chosen the following parameter values: $\mathrm{m}=0.2$, $\mathrm{l}= 0.3$, $I = 0.006$, $\mathrm{M} = 0.5$, $b = 0.1$, $g = 9.8$ and $\Delta t = 0.1$. The weights of the matrix $\mathbf{b} \in \mathbb{R}^{1 \times n}$ are once again chosen from a uniform random distribution. Additionally, we have $\sigma_y^2 = 0.01$, $\gamma = 0.99$, $\rho_p = 1$, $\rho_v=1$, $\rho_{\theta} = 10$, $\rho_{\omega}=10$, $\mathbf{S}_2 = 2\mathbf{I}_n$ and $\mathbf{R}_2 = 2\mathbf{I}_n$. Here, we consider episodes of length $T = 400$ timesteps. We show in the Appendix what the matrices $A_{\Psi}$, $A_{\nu}$, $B_x$ and $C$ correspond to for this example.

An emergent trend observed through these simulations is the existence of sparsity in both architecture and dynamics of the network. It must be noted that through the objective function we promote high fidelity control in an energy efficient manner. We do not explicitly have a sparsity promoting regularizer term either on the activity $\mathbf{x}$ or on the optimal feedback matrix $\mathbf{W}_k$ in our proposed objective function. The fact that the network achieves so both in its dynamics and in its architecture is an interesting and unexpected property.

\subsection{Robustness benefits of distributing a policy in a network}

One of the key characteristics of distributed computation such as the one demonstrated here is its robustness, i.e., its reliable performance when there is degradation of activity of a subset of units. An example of a biological system that utilizes such distributed computation is the brain \cite{betzel2017multi}. It is well known that degradation of neurons and synapses routinely occur in the brain, and often these occur without any manifestation of neurological disorders. In this section, we investigate whether such properties are imbibed in the network that we synthesize through model-free techniques. 

To investigate this, we lesion a fraction of the network at three different instances of the policy iteration algorithm - (a) before commencement of the policy iteration (b) after the first $k_1$ rounds of policy iteration, and, (c) at the conclusion of policy iteration when an optimal strategy has been learned (see Figure \ref{robustness_networked_policy} A, B). Any lesion performed persists to the end of the simulation. Introducing this lesion in the network is carried out mathematically as follows. Recall, that the matrix $\mathbf{b}$ linearly combined contributions of individual units to formulate the control signal. We posit that when a lesion occurs, the network controls the physical system via $\mathbf{b} = [\mathbf{b'}: \mathbf{0}_{m \times (n-n_f)}]$ where $\mathbf{b'} = \mathbf{b}[1:m, 1:n_f]$ and $n_f$ is the number of units that are functioning, i.e., contribution of $(n-n_f)$ units is reduced to zero. 

\begin{figure}[t]
	\centering
	\includegraphics[width = \linewidth]{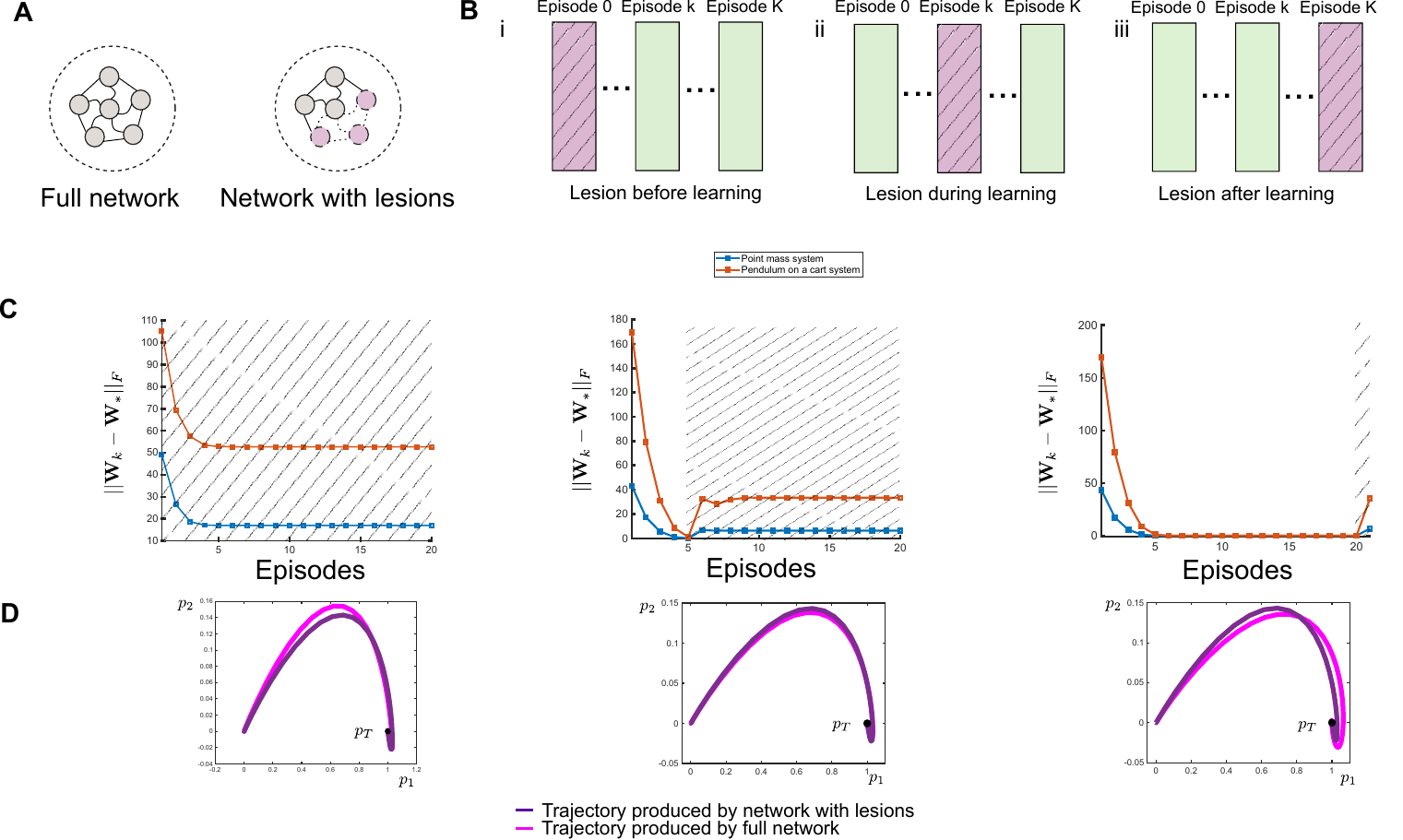}
	\caption{A. In a lesioned network contribution of a subset of units is removed. B. The network is disrupted via lesioning at three different times -(i) at the beginning of learning (ii) during learning (eg. at iteration $k_1 = 5$) and (iii) after the network has learnt an optimal strategy. The pink box with hatched pattern indicates when the disruption was introduced. C. Absolute difference of the strategy executed by the lesioned network and the full network (hatch pattern indicated the window for which the network was disrupted). D. Illustrative example of functioning of the network (i.e., navigation to a target location in a two-dimensional plane) with lesions in comparison to the full network for the point mass system.}
	\label{robustness_networked_policy}
\end{figure}

We observe that for both the point mass system as well as the inverted pendulum system in all three scenarios, the network is able to recover from the disruptions caused by lesion of the network and proceed with performing the task. Notably, when the lesion occurs before learning has begun, the network learns a policy that operates entirely without taking into consideration the units that were removed. As a result, the absolute difference between the learned policy and the true optimal policy constructed by the full network remains large, even though the task is performed with high fidelity (see Figure \ref{robustness_networked_policy} C, D). On the contrary, when the disruption occurs during the learning process or at the conclusion of the learning process, the network promptly compensates for it and proceeds to perform the task accurately albeit via a slightly sub-optimal policy. 

In this context it is worthwhile to mention that the performance of the network was much more robust to perturbations when it actuated the point mass system than when it stabilized the pendulum system (see Table \ref{robustness}). Intuitively, we can attribute this disparity in the algorithmic performance to the complexity of the system that is being controlled. The linear dynamical equations governing the pendulum on a cart given by \eqref{pendulum_1}, \eqref{pendulum_2}, \eqref{pendulum_3} and \eqref{pendulum_4} are approximations of the complex nonlinear dynamics around the unstable equilibrium point. Introducing perturbations in this system can therefore deflect it to regions in the state space where the linearized dynamics capture poorly the evolution of the system and therefore the estimates of the state action value function are inaccurate, which in turn causes policy iteration to diverge. In Table \ref{robustness}, we report success or failure of the network when a lesion was inflicted during the learning process. 

\begin{table}[t]
	\centering
	\includegraphics[width=\linewidth]{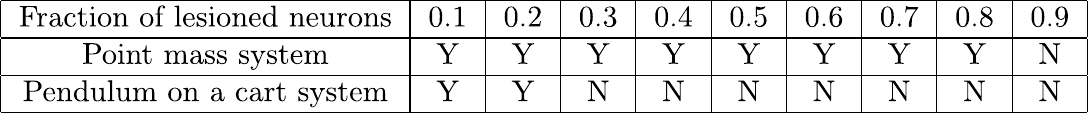}
	\caption[]{Robustness of network in the event of lesion}
	\text{Y: Network is able to recover and perform the task}
    \text{N: Network fails to perform the task}
   \label{robustness}
\end{table}

\subsection{Effects of tuning hyperparameters associated with policy iteration}

\subsubsection{Choice of episode length ($T$)}
One of the predominant challenges of designing control through simulations when the dynamics of the environment is unknown is ascertaining the associated sample complexity \cite{kakade2003sample}, i.e., determining how much experience must be simulated by the model for each round of policy iteration. When the state and action space are finite, then analytical bounds can be established over the number of samples with respect to the size of the state space, action space and the horizon for which the performance is desired \cite{kearns1999finite, kakade2003sample}. In this work, we consider however continuous and infinite state and action spaces and proceed by approximating the state action value function. In order to approximate the state-action value function $\mathrm{Q}_{\pi}(\bm{\Omega}_t, \mathbf{u}_t)$, we need to solve the least squares problem given by \eqref{error_min}. The number of parameters to be estimated in this problem is $(n'+n)^2$. However, if we consider symmetry of the $\mathbf{H}_{\pi}$ matrix, we need only estimate $\frac{1}{2}(n' + n + 1)(n' + n)$ parameters \cite{bradtke1994adaptive, lewis2012reinforcement}. As, in our formulation, $n' = n+2*m$ and $m<n$, without any loss of generality, we can say that in order to reliable estimates of $\mathrm{Q}_{\pi}$, the length of episodes needed scales as second order polynomial of the network size $n$. In Figure \ref{episode_length}, we have shown the quality of the learned policy as a function of length of episodes for a network of size $n=10$.
\begin{figure}[b]
	\centering
	\includegraphics[width= 0.9\linewidth]{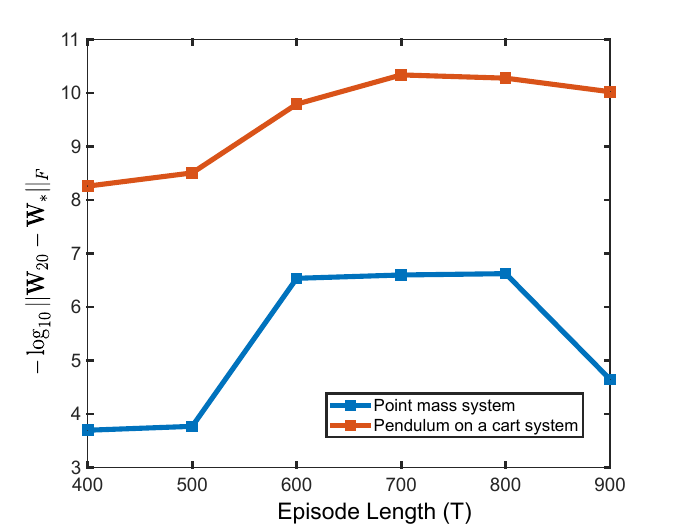}
	\caption{Quality of the solution provided after convergence of policy iteration as a function of length of episodes.}
	\label{episode_length}
\end{figure}

\subsubsection{Choice of discount factor $\gamma$}

For infinite horizon problems, to ensure that the sum of rewards converges i.e., the optimization problem is well defined, a discount factor $0 < \gamma < 1$ \cite{sutton2018reinforcement} is used. Intuitively, the discount factor acts as a parametric representation of `urgency'. Values of $\gamma$ closer to $0$ causes the network to care more for immediate costs and therefore be myopic, while that closer to $1$ causes the network to care more for future costs. We provide in Figure \ref{discount}, cost accrued over time steps for policies computed with different values for $\gamma$. 

\begin{figure}[t]
	\centering
	\includegraphics[width= \linewidth]{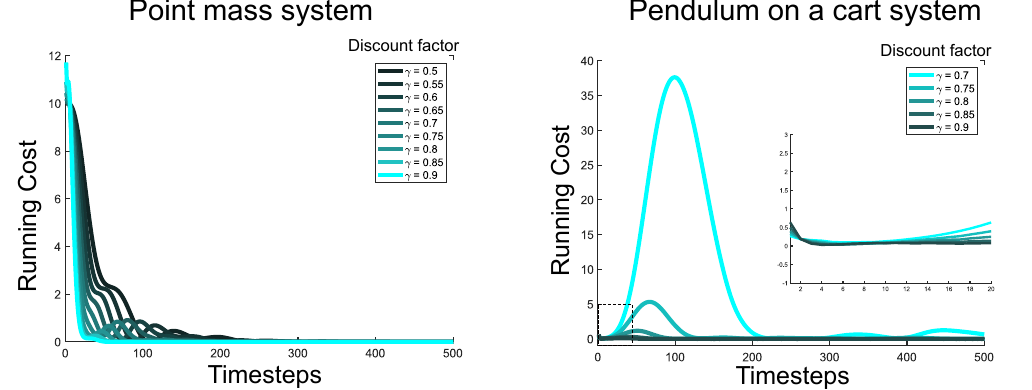}
	\caption{Running cost over timesteps across different discount factors. For smaller values of $\gamma$, the optimal strategy focuses on reducing costs earlier in the episode rather than later. (Inset) Costs accrued by the pendulum system in the first 25 timesteps.}
	\label{discount}
\end{figure}

\subsubsection{Exploration vs exploitation}

As mentioned in the Problem Formulation section, we take a different route than predominantly used $\epsilon$-greedy policy for exploration in this paper. We introduce a systematic exploration technique, i.e.,  at any given iteration, data over an episode is collected by perturbing the system via the input: $\mathbf{u}_t = \pi_k(\bm{\Omega}_t) + \varepsilon_t$, where $\varepsilon_t \sim \mathcal{N}(0, \sigma_y^2 \mathbf{I}_n)$. This conservative form of exploration ensures that other actions than given by the policy update step are investigated while maintaining that the policy chosen will mostly be stabilizing i.e., it would not cause the system to explode as it is simulated. When the system explodes owing to a randomly chosen policy that is not stabilizing, the bounds on performance as established in the convergence analysis section of this performance do not hold true and the algorithm fails to converge. 

In Figure \ref{exploration_figure}, we show for both the point mass system and the pendulum on a cart system, range of policies selected to probe the system under different degrees of exploration. Note that here we coin $\sigma_y^2$ as the degree of exploration. Intuitively, when $\sigma_y^2$ assumes a higher value, the system is encouraged to try wider range of action policies around the policy prescribed by the update step of the algorithm. 

\begin{figure}[b]
	\centering
	\includegraphics[width= \linewidth]{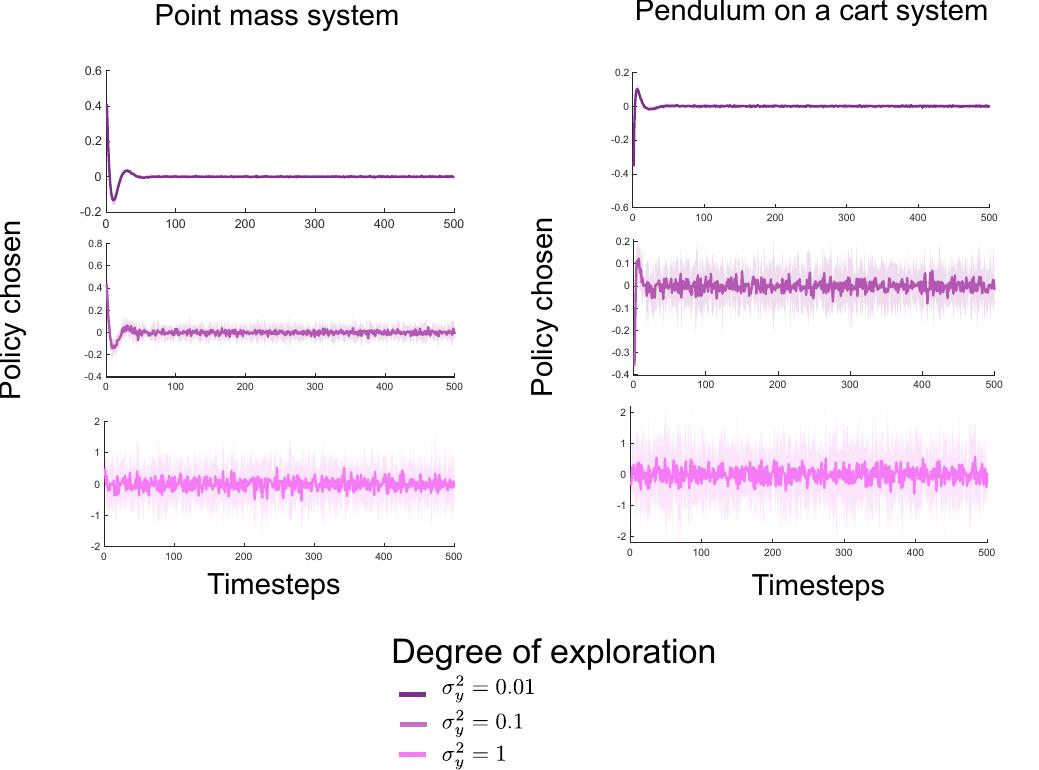}
	\caption{Policies (mean $\pm$ 2std) over time steps used to generate simulation data for different degrees of exploration. When higher degrees of exploration are considered, variance of policies considered is significantly higher. }
	\label{exploration_figure}
\end{figure}

\section{Discussion}

\subsection{Summary} 

In this work, we have provided a framework for synthesizing a distributed, network-based controller that can be adapted in order to manipulate linear dynamical systems. The networks are built by using an augmented state space, facilitating direct synthesis of network dynamics by means of solving a control objective. This method is analytically amenable to least-squares parametric adaptation, thus yielding the overall scheme for distributed control of unknown systems.

Our framework uses an online least squares approximate policy iteration method to adapt the controller (see Algorithm \ref{algo}). This algorithm has two key steps: evaluating the efficacy of the realized policy by means of a state action value function (see Problem Formulation), and then updating the policy based on this evaluation. The algorithm repeats these two steps until a stopping condition is reached. As such, the overall controller can be thought of as a network with two time-scales. Through the outer time-scale, the feedback matrix $\mathbf{W}_k$ is updated epsiodically. On the other hand, the inner time-scale is associated with the dynamics of the network itself (see Equation \eqref{inner_time_scale_2}), dictating how it generates activity and ultimately, the signals that will drive the plant. 

\subsection{Tractability}

An advantage of our framework is tractability for analysis of when we can expect this model-free policy iteration to reach a `good' solution. One of the challenges with convergence analysis of online least squares based methods have been the fact that the policy update step inherently depends on the estimate of state action value function in the policy evaluation step \cite{bertsekas1996neuro, bucsoniu2012least, friedrich2019least}. However, because we perform updates on a prior architecture that is analytically associated with a well-defined optimal control problem, we can probe conditions and bounds for convergence to a true optimal solution. More precisely, if we start with a stabilizing policy $\pi$ and gather enough data (scales linearly with the number of parameters to be estimated) so that the least squares problem is well posed, then we obtain a stabilizing policy rapidly (in the next update). This sequence of stabilizing policies will converge and approach the optimal policy. 

\subsection{Robustness of the distributed controller}

One of the key premises for distributing a controller onto a network is robustness \cite{gupta2006robustness}. In other words, when a subsection of the network fails to contribute to the task, the remaining units can compensate for it and ensure that the task is completed. To demonstrate that the network we synthesized in this work embeds this robustness property, we manually removed certain units during different phases of the iterative procedure. Once removed, these units were not added back to the network. Depending on when this `lesioning' procedure was carried out, the network adapted differently to complete the task. We found that the network was particularly robust for controlling the point mass system, wherein removal of as much as $80\%$ of the network allowed task completion albeit with slightly sub-optimal strategies. Because the pendulum model is a linearization, perturbation of the network is potentially less robust as it may result in a policy wherein the state departs from a neighborhood of the equilibrium in question.  In general, the precise ratio of units that can be lesioned is likely a function of the complexity of the plant dynamics, relative to the number of units in the controller network.  

Robustness of the approach extends to choice of hyperparameters such as episode length $T$, discount factor $\gamma$ and degree of exploration $\sigma_y^2$. We find that episode length scales as a second order polynomial function of the network size $n$. This is expected as in the policy evaluation step, we must estimate the state action value function which is parametrically represented, the number of parameters increasing with size of the network that is used to generate control. We also provide analysis of observations noted by varying how much discounted future costs were and how much exploration was executed by the network. 

\subsection{Features not explained}

There are a number of important caveats and limitations that must be pointed out regarding this work. Most notably, we have limited our derivation at this point to linear systems, though our recent work \cite{mallik2021topdown} provides a basis for potential future extension to certain nonlinear systems as well. 


In this work, we have identified the two timescales emergent from the algorithm . We have provided a closed form for the network activity  (see Equation \eqref{inner_time_scale_2}) which receives feedback from the environment. Comparing this distributed solution scheme to the activity of a network of \textit{firing rate} neurons and synapses is challenging. \cite{dayan2001theoretical}. This is because the adaptive dynamics shown in Equation \eqref{outer_time_scale} are not biological in nature, since they rely on solving a global optimization problem. One possible extension to reconcile this issue is to use gradient based methods in the policy evaluation step \cite{lewis2012reinforcement}.
For example, 
\begin{equation}
	\bm{\theta}_{k}^{i+1} = \bm{\theta}_{k}^{i} - \eta \Phi_t((\bm{\theta}_{k}^{i})^T\Phi_t - \mathbf{c}_t)
\end{equation}
These methods are not without their limitations particularly when dealing with discrete time continuous state space problems. For instance, finding an initial choice of parameters $\bm{\theta}^0$ for which the resulting feedback matrix is stabilizing is non-trivial \cite{fazel2018global, park2020structured}. Secondly, the choice of step size $\eta$ in these gradient based approaches significantly impacts the convergence of the algorithm \cite{fazel2018global, schulman2015trust, schulman2017proximal, park2020structured} particularly when updating based on a single observation $(\bm{\Omega}_t, \mathbf{c}_t, \bm{\Omega}_{t+1})$. Gradient smoothing by simulating multiple trajectories and using a small fixed horizon for collecting samples instead of a single time point introduces some tractability in terms of convergence \cite{fazel2018global, park2020structured} to a solution. However, under these steps the convergence becomes sensitive to the collective choice of hyperparameters such as number of trajectories simulated, horizon selected etc. Finally, heuristic methods such as line search can also be used \cite{park2020structured} to improve algorithmic performance, but reconciling how these heuristics translate to biologically plausible computation remains to be addressed.

\section{Appendix}

\subsection{Reformulation of optimization problems to the discrete time LQR format}

In this section we provide the explicit forms for the matrices $\mathbf{A}$ and $\mathbf{B}$ used for specifying the dynamics of the system as well the penalty matrices $\mathbf{Q}$, $\mathbf{R}$ used to compute the cost at each timestep.

\subsubsection{Point mass system}

In this problem, we have $\bm{\Psi}_t \equiv \mathbf{p}_t$ and $\bm{\nu}_t \equiv \mathbf{v}_t$. The dynamics is governed by $C = \Delta t \mathbf{I}_m$, $A_{\Psi} = \mathbf{0}_{m\times m}$, $A_{\nu} =(1 - \Delta t\lambda_v) \mathbf{I}_m$ and $B_x =\Delta t \mathbf{b}$. The penalty matrices are given by $\mathbf{Q} = \left[\begin{array}{ccc}
	\mathbf{Q}_1 & & \\
	 &\mathbf{0}_{m \times m} &\\
	 & &\mathbf{S}_1
\end{array}\right]$ and $\mathbf{R} = \mathbf{R}_1$. 

\vspace{0.5cm}

\subsubsection{Pendulum on a cart system}

In this problem, we have $\bm{\Psi}_t \equiv [p_t, \theta_t]^T$ and $\bm{\nu}_t \equiv [v_t, \omega_t]^T$. The dynamics is governed by $C = \Delta t\mathbf{I}_m$, $A_{\Psi} = \Delta t\left[\begin{array}{cc}
	0 &\frac{\mathrm{m}^2g\mathrm{l}^2}{I(\mathrm{M}+\mathrm{m})+\mathrm{Mml}^2} \\
	0 &\frac{\mathrm{mgl(M+m)}}{I(\mathrm{M}+\mathrm{m})+\mathrm{Mml}^2}
\end{array}\right]$, $A_{\nu} = \mathbf{I}_m + \Delta t\left[\begin{array}{cc}
\frac{-(I + \mathrm{ml}^2)b}{I(\mathrm{M}+\mathrm{m})+\mathrm{Mml}^2} &0\\
 \frac{\mathrm{m^2gl^2}}{I(\mathrm{M}+\mathrm{m})+\mathrm{Mml}^2} &0
\end{array}\right]$, and, $B_x =\Delta t \left[\begin{array}{c}
\frac{(I+\mathrm{ml}^2)\mathbf{b}}{I(\mathrm{M}+\mathrm{m})+\mathrm{Mml}^2} \\
\frac{\mathrm{ml}\mathbf{b}}{I(\mathrm{M}+\mathrm{m})+\mathrm{Mml}^2}
\end{array}\right]$. The penalty matrices are given as $\mathbf{Q} = \left[\begin{array}{ccccc}
\rho_p & & & &\\
 &\rho_{\theta} & & &\\
 & &\rho_v & &\\
 & & &\rho_{\omega} &\\
 & & & &\mathbf{S}_2
\end{array}\right]$ and $\mathbf{R} = \mathbf{R}_2$. These linearizations hold within $\pm 10$ degrees of the unstable fixed point.

\subsection{Proof of Lemma \ref{lemma_3}}

We know, $||\hat{\mathrm{Q}} - \mathrm{Q}_*|| \leq \epsilon$. Now, 

\begin{equation}
	\begin{aligned}
	\mathrm{Q}_*(\bm{\Omega}_t, \pi(\bm{\Omega}_t)) - \mathrm{V}_*(\bm{\Omega}_t) 
	&= \mathrm{Q}_*(\bm{\Omega}_t, \pi(\bm{\Omega}_t)) - \hat{\mathrm{Q}}(\bm{\Omega}_t, \pi(\bm{\Omega}_t))+\\ &\hat{Q}(\bm{\Omega}_t, \pi(\bm{\Omega}_t)) - \mathrm{V}_*(\bm{\Omega}_t) \\
	&\leq \hat{Q}(\bm{\Omega}_t, \pi(\bm{\Omega}_t)) - \mathrm{V}_*(\bm{\Omega}_t) + \epsilon \\
	&= \hat{\mathrm{Q}}(\bm{\Omega}_t, \pi(\bm{\Omega}_t)) - \mathrm{Q}_*(\bm{\Omega}_t, \pi_*(\bm{\Omega}_t)) + \epsilon \\
	&\leq 2\epsilon
	\end{aligned}
\end{equation}
The last step is possible as by construction of $\pi$, $\hat{\mathrm{Q}}(\bm{\Omega}_t, \pi(\bm{\Omega}_t)) \leq \hat{\mathrm{Q}}(\bm{\Omega}_t, \pi_*(\bm{\Omega}_t))$.

We can now write that:
\begin{equation}
	\begin{aligned}
		\mathrm{V}_{\pi}(\bm{\Omega}_t) - \mathrm{V}_*(\bm{\Omega}_t) 
		&= \mathrm{V}_{\pi} - \mathrm{Q}_*(\bm{\Omega}_t, \pi(\bm{\Omega}_t)) + \mathrm{Q}_*(\bm{\Omega}_t, \pi(\bm{\Omega}_t)) - \mathrm{V}_* \\
		&\leq \mathrm{V}_{\pi} - \mathrm{Q}_*(\bm{\Omega}_t, \pi(\bm{\Omega}_t)) + 2\epsilon \\
		&= \mathrm{Q}(\bm{\Omega}_t, \pi(\bm{\Omega}_t)) - \mathrm{Q}_*(\bm{\Omega}_t, \pi(\bm{\Omega}_t)) + 2\epsilon \\
		&= \mathrm{Q}(\bm{\Omega}_t, \pi(\bm{\Omega}_t)) - \mathbf{c}_t + \mathbf{c}_t - \mathrm{Q}_*(\bm{\Omega}_t, \pi(\bm{\Omega}_t)) + 2\epsilon \\
		&= \gamma \left[\mathrm{V}_{\pi}(\bm{\Omega}_{t+1}) - \mathrm{V}_{*}(\bm{\Omega}_{t+1})\right] + 2\epsilon
	\end{aligned}
\end{equation}
By recursing on this equation, we have $\mathrm{V}_{\pi}(\bm{\Omega}_t) - \mathrm{V}_*(\bm{\Omega}_t) \leq 2\epsilon (1 + \gamma + \gamma^2 +...) = \frac{2\epsilon}{1 - \gamma}$. Taking $\epsilon' = \frac{2\epsilon}{1 - \gamma}$ norm thereafter proves the Lemma.

\section*{Acknowledgment}

This work has been supported by Grants CMMI - 1653589 and EF-1724218 from the National Science Foundation. 

\bibliographystyle{IEEEtran}
\bibliography{references}

\end{document}